\newtheorem{theorem}{Theorem}[section]
\newtheorem{lemma}[theorem]{Lemma}
\newtheorem{observation}[theorem]{Observation}
\newproof{proof}{Proof}   
\newcommand{\blue}{\textcolor{black}}
\newcommand{\steven}{\textcolor{black}}
\newcommand{\fudge}{\textcolor{black}}
\newcommand{\rSPR}{\mathrm{rSPR}}
\newcommand{\TBR}{{\rm TBR}}
\newcommand{\MP}{{\rm MP}}
\newcommand\delete[1]{}
\begin{document}

\begin{frontmatter}

\title{Cyclic generators and an improved linear kernel for the
rooted subtree prune and regraft distance}

\author[1]{Steven Kelk}
\author[2]{Simone Linz}
\author[1]{Ruben Meuwese \fnref{fn1}}

\fntext[fn1]{Ruben Meuwese was supported by the Dutch Research Council (NWO) KLEIN 1 grant \emph{Deep kernelization for phylogenetic discordance}, project number OCENW.KLEIN.305.}

\address[1]{Department of Data Science and Knowledge Engineering, Maastricht University, The Netherlands}
\address[2]{School of Computer Science, University of Auckland, New Zealand}

\begin{abstract}
The rooted subtree prune and regraft (rSPR) distance between two rooted binary phylogenetic trees is a well-studied measure of topological dissimilarity that is NP-hard to compute.
Here we describe an improved linear kernel for the problem. In particular, we show that if the
classical subtree and chain reduction rules are augmented with a modified type of chain reduction rule, the resulting trees have at most $9k-3$ leaves, where $k$ is the rSPR distance; and that this bound is tight. The previous best-known linear kernel had size $O(28k)$. To achieve this improvement  we introduce cyclic generators, which can be viewed as cyclic analogues of the generators used in the phylogenetic networks literature. \steven{As a corollary to our main result we also give an improved weighted linear kernel for the minimum hybridization problem on two rooted binary phylogenetic trees.}
\end{abstract}

\begin{keyword}
data reduction rule\sep fixed-parameter tractability\sep generators\sep kernelization\sep phylogenetic tree and network\sep subtree prune and regraft
\end{keyword}

\end{frontmatter}

\section{Introduction}
The central challenge of phylogenetics is to infer the evolutionary history of a set of contemporary species $X$. Often this history is modeled by a \emph{rooted phylogenetic tree}; essentially, a rooted tree in which the leaves are bijectively labeled by $X$ and evolution is explicitly directed away from the root \cite{steel2016phylogeny}. Due to confounding biological or methodological factors the inferred trees sometimes differ in topology, and then it is useful to formally quantify these differences \cite{HusonRuppScornavacca10}. One popular such difference measure is the \emph{rooted subtree prune and regraft} (rSPR) distance. Informally this measures the number of times that a subtree must be pruned, and re-attached, to transform one tree into another. Despite the NP-hardess of computing this distance \cite{bordewich2005computational}, very fast fixed-parameter tractable branching algorithms have been developed which allow the problem to be well solved in practice, as long as the rSPR distance does not become too large \cite{whidden2013fixed,
%Whidden2014
yamada2020improved}.
A related concept is \emph{kernelization}: polynomial-time pre-processing rules which reduce the size of the input trees to purely a function of their rSPR distance \cite{kernelization2019}. Compared to branching algorithms there \blue{has 
been} relatively little work on kernelization of rSPR. Indeed, currently the best-known result is that after exhaustive application of the \emph{subtree} and \emph{chain} reduction rules the
\delete{two}
input trees have at most $O(28k)$ leaves, where $k$ is the rSPR distance \cite{bordewich2005computational}.

\blue{In this paper,} we show that when a third, modified chain reduction rule is added to the portfolio, the bound improves to $9k-3$, and that this is in fact tight. To prove this we first show that computation of rSPR distance is essentially equivalent to the problem of parsimoniously embedding the two input trees into a potentially cyclic phylogenetic network (i.e. graph); it is a cyclic variant of the much-studied \steven{\emph{minimum hybridization}}
problem (see e.g. \cite{vanIersel20161075} and links therein). This allows us to introduce \emph{cyclic generators} which summarize the backbone of such networks, and allow us to carefully bound the size of reduced instances\delete{\footnote{Interestingly, the \steven{minimum hybridization} problem on two rooted trees was originally characterized as an acyclic variant of rSPR distance \cite{Semple:2007ug}, which in turn gave rise to acyclic generators \cite{approximationHN}. Here we are adapting these \emph{later} abstractions for the \emph{earlier} problem of rSPR distance.}}. Our approach is inspired by a similar strategy which has proven to be very powerful in the design of reduction rules for \emph{unrooted} phylogenetic trees \cite{kelk2020new}. \steven{As a corollary to our main rSPR result, we also show that the three aforementioned reduction rules yield a weighted linear kernel of $7k-2$ for the \steven{minimum hybridization} problem, where $k$ is the hybridization number of the two trees. This improves upon the weighted $9k-2$ kernel given in \cite{approximationHN}.}

\delete{We stress that all the \steven{rSPR} results in this article equally apply to kernelization of \emph{rooted maximum agreement forests} (on two rooted trees), due to the well-established equivalence between rSPR distance and such forests \cite{bordewich2005computational}.} 

\section{Preliminaries}

Throughout this paper, $X$ denotes a non-empty finite set. \\

\noindent {\bf Phylogenetic trees.} A {\em rooted phylogenetic $X$-tree} $T$ is a rooted tree with no degree-2 vertex, except for the root which has degree at least 2, and 
whose leaf set is $X$. All edges of $T$ are directed away from the root, i.e. if $(u,v)$ is an edge of $T$, then $u$ lies on the directed path from the root of $T$ to $v$. Furthermore, $T$ is {\it binary} if its root has degree 2 and all other interior vertices have degree 3. The leaf set $X$  is the {\it label set} of $T$ and denoted by $L(T)$. For two vertices $u$ and $v$ in $T$, we say that $u$ is an {\it ancestor} of $v$ if there is a directed path from the root of $T$ to $v$ that contains $u$.
We next define three types of subtrees of $T$ relative to \blue{a subset $X'\subseteq X$}. First, we write $T[X']$ to denote the minimal rooted subtree of $T$ that connects all elements in $X'$. Second, the {\it restriction} of $T$ to $X'$, denoted by $T|X'$, is the rooted phylogenetic $X'$-tree obtained from $T[X']$ by suppressing all vertices with in-degree 1 and out-degree 1. Lastly, a rooted subtree of $T$ is {\it pendant} if it can be detached from $T$ by deleting a single edge. Since all rooted phylogenetic trees throughout this paper are binary, we refer to a rooted binary phylogenetic tree simply as a {\it rooted phylogenetic tree}. For two rooted phylogenetic $X$-trees $T$ and $T'$, we say that $T$ and $T'$ are {\it isomorphic} if there is a bijection $\phi$ from the vertex set $V$ of $T$ to the vertex set of $T'$ such that $\phi(x)=x$ for each $x \in X$, and $(u,v)$ is an edge of $T$ if and only if $(\phi(u),\phi(v))$ is an edge of $T'$ for all $u,v\in V$. If $T$ and $T'$ are isomorphic, we write $T=T'$. \\

\noindent {\bf rSPR and agreement forests.} Let $T$ be a rooted phylogenetic $X$-tree. For the purposes of the upcoming definitions and indeed much of the paper, we view the root of $T$ as a vertex $\rho$ adjoined to the original root by a pendant edge. Furthermore, we regard $\rho$ as part of the label set of $T$, that is $L(T )=X\cup\{\rho\}$. Fig.~\ref{fig:trees} illustrates an example of two rooted phylogenetic $X$-trees with $X=\{x_1,x_2,\ldots,x_6\}$ with their roots labeled with $\rho$. Let $e=(u,v)$ be an edge of $T$ not incident with $\rho$. Let $T'$ be the rooted phylogenetic $X$-tree obtained from $T$ by deleting $e$ and re-attaching the resulting rooted subtree containing $v$ via a new edge$f$ as follows. Subdivide an edge of the component that contains $\rho$ with a new vertex $u'$, join $u'$ and $v$ with $f$, and suppress $u$. We say that $T'$ has been obtained from $T$ by a {\it rooted subtree prune and regraft} (rSPR) operation.
The {\it $\rSPR$ distance}  between any two rooted phylogenetic $X$-trees $T$ and $T'$, denoted by $d_\rSPR(T,T')$, is the minimum number of $\rSPR$ operations that transform $T$ into $T'$. It is well known that one can always transform $T$ into $T'$ via a sequence of $\rSPR$ operations. However, computing $d_\rSPR(T,T')$ is an NP-hard problem~\cite{bordewich2005computational,hein1996complexity}.

\begin{figure}
\center
\scalebox{1}{\input{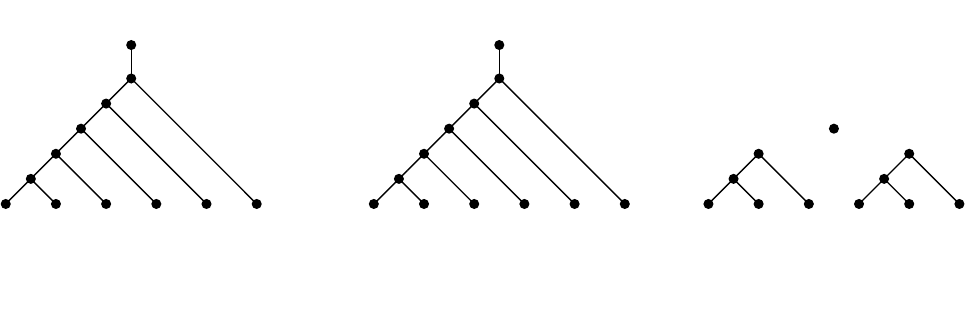_t}}
\caption{Two rooted phylogenetic trees $T$ and $T'$ with their roots labeled $\rho$, and an agreement forest $F$ for $T$ and $T'$. All edges are directed downwards.} 
\label{fig:trees}
\end{figure}

Now, let $T$ and $T'$ be two rooted phylogenetic $X$-trees.  An {\it agreement forest} $F=\{L_\rho,L_1,\ldots,L_k\}$ for $T$ and $T'$ is a partition of $X\cup\{\rho\}$ such that $\rho\in L_\rho$ and the following two properties are satisfied:
\begin{enumerate}[(P1)]
\item  For all $i\in \{\rho,1,\ldots,k\}$, we have $T|L_i=T'|L_i$.
\item The trees in $\{T[L_i]: i\in \{\rho,1,\ldots,k\}\}$ and 
$\{T'[L_i]: i\in \{\rho,1,\ldots,k\}\}$ are vertex-disjoint subtrees of $T$ and $T'$, respectively. 
\end{enumerate}
An agreement forest for $T$ and $T'$ is a {\em maximum agreement forest} if, amongst all agreement forests for $T$ and $T'$, it has the smallest number of elements. To illustrate, Fig.~\ref{fig:trees} shows an agreement forest $F$ for the two rooted phylogenetic trees $T$ and $T'$ of the same figure. Indeed, $F$ is a maximum agreement forest for $T$ and $T'$. The following theorem characterizes the rSPR distance between two rooted phylogenetic trees (with their roots labeled $\rho$) in terms of agreement forests.

\begin{theorem}\cite{bordewich2005computational}\label{t:first-characterization}
Let $T$ and $T'$ be two rooted phylogenetic $X$-trees, and let $F$ be a maximum agreement forest for $T$ and $T'$. Then $d_\rSPR(T,T')=|F|-1$.
\end{theorem}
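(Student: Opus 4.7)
The plan is to establish the two inequalities $d_\rSPR(T,T') \leq |F|-1$ and $d_\rSPR(T,T') \geq |F|-1$ separately, where for the former it suffices to take any agreement forest $F$ (hence in particular a maximum one), and for the latter we crucially use that $F$ is maximum, i.e.\ has the smallest size among all agreement forests for $T$ and $T'$.

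For the upper bound, let $F = \{L_\rho, L_1, \ldots, L_k\}$ be a maximum agreement forest, so $|F|-1 = k$. I would construct a sequence of $k$ rSPR moves that transform $T$ into $T'$. Process the blocks $L_1,\ldots,L_k$ in an order dictated by $T'$, for instance a reverse topological order on the tree obtained from $T'$ by contracting each block to a vertex, so that whenever one block lies above another in $T'$ the lower one is moved first. For the current block $L_i$, property (P2) ensures that in the intermediate tree the subtree corresponding to $L_i$ is pendant; detach it and regraft it on the edge whose position mirrors the attachment point of $T'[L_i]$ in $T'$. Property (P1) guarantees that the internal shape of $T|L_i$ already agrees with $T'|L_i$, so the single rSPR move only needs to correct the attachment edge, not the internal topology. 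After $k$ such moves the intermediate tree equals $T'$, giving $d_\rSPR(T,T') \leq k = |F|-1$.

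For the lower bound, assume $T = T_0 \to T_1 \to \cdots \to T_d = T'$ is an optimal rSPR sequence, so $d = d_\rSPR(T,T')$. Each move cuts exactly one edge in the current tree; by tracing these cuts back through the sequence (accounting for regrafts and degree-two suppressions along the way) we identify at most $d$ edges of $T$ whose removal partitions $T$ into at most $d+1$ pendant subtrees, and symmetrically at most $d$ edges of $T'$ whose removal yields the matching pendant subtrees in $T'$. The common partition $\{L_\rho', L_1', \ldots, L_m'\}$ of $X \cup \{\rho\}$ they induce satisfies (P2) by construction, since the pendant subtrees on both sides are vertex-disjoint, and satisfies (P1) because the sequence of rSPR moves acts as the identity on the restriction to each block once the relevant boundary edges have been removed. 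This partition is therefore an agreement forest for $T$ and $T'$ with at most $d+1$ blocks; maximality of $F$ gives $|F| \leq d+1$, i.e.\ $|F|-1 \leq d_\rSPR(T,T')$.

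The main obstacle is the bookkeeping in the lower bound: after several rSPR operations the correspondence between an edge of $T_{i-1}$ and an edge of $T$ becomes obscured by intermediate regrafts and suppressions of resulting degree-two vertices, so verifying (P1) and (P2) for the lifted partition requires a careful induction on the length of the rSPR sequence, together with a consistent convention for which edge in $T_{i-1}$ is designated as the one being ``cut'' at each step. By comparison, the upper bound is routine once a valid processing order for the blocks has been identified.
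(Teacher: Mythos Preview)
The paper does not prove this theorem; it is quoted from Bordewich and Semple (2005) and used as a black box. So there is no ``paper's own proof'' to compare against, and your sketch should be assessed on its own merits.

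Your lower-bound direction is essentially the standard argument and, modulo the bookkeeping you already flag, is sound.

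The upper-bound direction, however, has a genuine gap. You order the blocks by their position in $T'$ and then assert that ``property (P2) ensures that in the intermediate tree the subtree corresponding to $L_i$ is pendant.'' This fails already at the first step. Take $X=\{a,b,c,d\}$, let $T$ be the caterpillar with leaves $a,b,c,d$ attached in that order below $\rho$ (so $\{c,d\}$ is the bottom cherry), let $T'$ be the caterpillar with leaves $c,d,a,b$ in that order below $\rho$, and take the agreement forest $F=\{\{\rho\},\{a,b\},\{c,d\}\}$. In $T'$ the block $\{a,b\}$ is lowest, so your rule processes it first; but in the starting tree $T$ the root of $T[\{a,b\}]$ has $c$ and $d$ below it, so $\{a,b\}$ is \emph{not} pendant and cannot be detached by a single prune. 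Property (P2) only gives vertex-disjointness of the $T[L_i]$; it does not say anything about which block is pendant in an intermediate tree whose shape is still governed by $T$.

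The fix used in the original proof is to choose, at each step, a block $L_i\neq L_\rho$ whose root in the \emph{current} tree is deepest (such a block is always pendant, by vertex-disjointness), prune it, and regraft it onto the edge of the current tree that corresponds to its attachment point in $T'$. One then argues that $L_i$ can be merged with the adjacent block in $T'$ to obtain an agreement forest of size $|F|-1$ for the new tree and $T'$, and the inequality follows by induction on $|F|$. The non-trivial content is identifying the correct regraft edge and verifying that the merged partition is again an agreement forest; simply fixing a global processing order in advance, whether by $T$ or by $T'$, does not by itself yield this.
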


\section{\blue{Leaf-labeled} graphs characterize the rSPR distance}
In this section, we establish an alternative characterization for the rSPR distance between two rooted phylogenetic trees.  A {\em rooted leaf-labeled graph $G$ on $X$} is a rooted directed 
graph with no parallel edges \steven{or loops} that satisfies the following four properties:
\begin{enumerate}[(i)]
\item \steven{the unique root has in-degree 0 and out-degree 1, and is labeled $\rho$,}
\item a vertex of out-degree 0 has in-degree 1, and the set of vertices with out-degree 0 is $X$, 
\item all other vertices either have in-degree 1 and out-degree 2, or in-degree 2 and out-degree 1, and
\item each vertex can be reached from \blue{$\rho$} via a directed path.
\end{enumerate}
A vertex of $G$ with in-degree 1 and out-degree 2 is a {\em tree vertex}, while a vertex of in-degree 2 and out-degree 1 is a {\em reticulation}. For two vertices $u$ and $v$ in $G$, we say that $u$ is a {\it parent} of $v$ if $(u,v)$ is an edge.  In contrast to a rooted phylogenetic network~\cite{HusonRuppScornavacca10,steel2016phylogeny}, observe that $G$ may contain a directed cycle. \steven{Nevertheless, as in the case of rooted (binary) phylogenetic networks, the number of reticulations in $G=(V,E)$, denoted $r(G)$, is  
\blue{equal} to $|E|-(|V|-1)$. This is because $G$, due to property (iv), has a directed spanning tree, rooted
at $\rho$, with $|V|-1$ edges. The spanning tree does not yet have any vertices with in-degree 2. Each of the $|E|-(|V|-1)$ edges from $E$ that are not on the spanning tree, creates exactly one in-degree 2 vertex when added to it. Hence, there are exactly  $|E|-(|V|-1)$ reticulations in total. \delete{(Note that property (iv) also entails that any directed cycle on $G$ must contain at least one reticulation).}}

As for rooted phylogenetic trees, a rooted subtree of $G$ is {\it pendant} if it can be detached from $G$ by deleting a single edge. Let $T$ and $T'$ be two rooted phylogenetic $X$-trees. We say that $T$ is {\it displayed} by $G$ if  there exists a subgraph of $G$ that is a subdivision of $T$. Moreover we set 
$$r^\circ(T,T')=\min_G\{r(G)\}.$$  \steven{That is,}
 $r^\circ(T,T')$ equates to the minimum number of reticulations over all rooted leaf-labeled graphs that display $T$ and $T'$. Fig.~\ref{fig:graph-generator} shows a rooted leaf-labeled graph $G$ that displays the two rooted phylogenetic trees $T$ and $T'$ that are depicted in Fig.~\ref{fig:trees}. Note that $r(G)=2=d_\rSPR(T,T')$. The next theorem shows that this relationship is not a coincidence. We note that the idea of viewing a sequence of rSPR operations as a rooted leaf-labeled graph was briefly mentioned in~\cite{Semple:2007ug} for the purpose of highlighting that such a graph may contain a directed cycle. 

\begin{figure}
\center
\scalebox{1}{\input{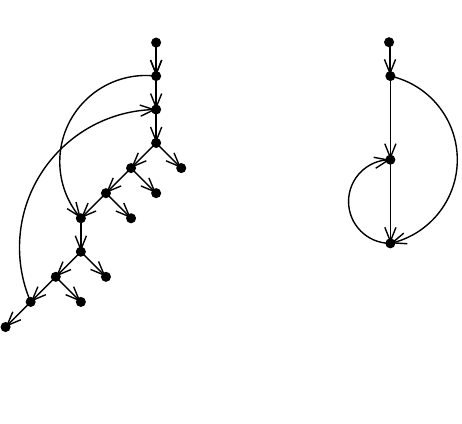_t}}
\caption{Left: A rooted leaf-labeled graph $G$ that displays the two rooted phylogenetic trees $T$ and $T'$ that are shown in Fig.~\ref{fig:trees}. The reticulations of $G$ are $v_1$ and $v_2$. To see that $G$ displays $T'$, note that the graph obtained from $G$ by deleting the two edges $(u_1,v_1)$ and $(u_2,v_2)$ is a subdivision of $T'$. Right: The cyclic 2-generator $G'$ that underlies $G$. To obtain $G$ from $G'$, the elements in $\{x_1,x_2,x_3\}$ are attached to the side $(v_2,v_1)$ and the elements in $\{x_4,x_5,x_6\}$ are attached to the side $(v_1,v_2)$.}
\label{fig:graph-generator}
\end{figure}

\begin{theorem}\label{t:second-characterization}
Let $T$ and $T'$ be two rooted phylogenetic $X$-trees. Then $d_\rSPR(T,T')=r^\circ(T,T')$.
\end{theorem}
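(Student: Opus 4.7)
The plan is to prove both inequalities $r^\circ(T,T') \leq d_\rSPR(T,T')$ and $d_\rSPR(T,T') \leq r^\circ(T,T')$ separately, using Theorem~\ref{t:first-characterization} as a bridge on the second inequality.

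For $r^\circ(T,T') \leq d_\rSPR(T,T')$, I would take an optimal sequence of rSPR operations $T = T_0, T_1, \ldots, T_k = T'$ where $k = d_\rSPR(T,T')$, and build a rooted leaf-labeled graph $G$ inductively. At each step $i$, if the operation transforming $T_{i-1}$ into $T_i$ prunes an edge $(u,v)$ and regrafts the subtree containing $v$ at a new vertex $u'$ that subdivides some edge $(a,b)$, then in the current graph I subdivide $(a,b)$ by $u'$ and add the edge $(u',v)$, \emph{without} deleting the old edge $(u,v)$ or suppressing $u$. This turns $v$ into a new reticulation; after $k$ steps, the resulting graph $G$ has exactly $k$ reticulations. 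I would then show by induction that $G$ displays every intermediate tree $T_i$ (so in particular $T$ and $T'$): for each reticulation, retaining only the ``old'' in-edge recovers a subdivision of $T$, while retaining only the ``new'' in-edge recovers a subdivision of $T'$. Verifying properties (i)--(iv) is routine; (iv) follows because every leaf remains reachable throughout the construction, and parallel edges or loops, if they would arise at a particular step, can be avoided by introducing an auxiliary subdivision on an incident edge.

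For $d_\rSPR(T,T') \leq r^\circ(T,T')$, I would take a minimizing graph $G$ with $r = r^\circ(T,T')$ reticulations together with subdivisions $H, H' \subseteq G$ of $T$ and $T'$. For each reticulation $v$ of $G$, at most one of its two incoming edges lies in $H$ and at most one in $H'$, because subdivisions have maximum in-degree 1. Let $S$ be the set of reticulations where these choices differ (including the cases where $v$ is reached in only one of $H, H'$); note $|S| \leq r$. Cutting at the $|S|$ disagreeing reticulations simultaneously induces a partition of $X \cup \{\rho\}$ into at most $r+1$ blocks, and I would verify that this partition satisfies properties (P1) and (P2) of an agreement forest by tracking how each cut corresponds to a well-defined edge of $T$ and of $T'$. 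Theorem~\ref{t:first-characterization} then yields $d_\rSPR(T,T') \leq r$.

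The hard part is the lower-bound direction: one must carefully certify that the partition really defines an agreement forest. Specifically, (P1) requires that $T$ and $T'$ restrict to isomorphic trees on each block, and (P2) requires that the embedded subtrees $T[L_i]$ and $T'[L_i]$ be pairwise vertex-disjoint in $T$ and in $T'$. The cyclic nature of $G$, absent in the analogous statement for hybridization number, needs extra attention, since a directed cycle passing through several reticulations might entangle the in-edge choices; the key observation is that once \emph{every} disagreeing reticulation is cut, the remaining pieces of $H$ and $H'$ fall apart into matching leaf-labeled fragments regardless of any cyclicity. A secondary subtlety, in the upper bound, is the bookkeeping needed to rule out parallel edges during the inductive graph construction.
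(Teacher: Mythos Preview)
Your inductive construction for the bound $r^\circ(T,T') \leq d_\rSPR(T,T')$ is essentially the paper's argument. One quibble: you propose to add the new edge $(u',v)$ to the \emph{existing} vertex $v$ in the current graph, but $v$ may already have in-degree~2 from an earlier step, so this can create an in-degree~3 vertex. The paper avoids this by subdividing an edge on the path corresponding to the pruned tree edge with a \emph{fresh} vertex $v_1$ (and likewise a fresh $v_2$ on the target path), then adding $(v_2,v_1)$; this simultaneously rules out parallel edges and loops without any ad hoc fix.

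For the converse $d_\rSPR(T,T') \leq r^\circ(T,T')$ the paper takes a genuinely different and slicker route than your ``cut at disagreeing reticulations'' plan. It extends the $T'$-subdivision to a directed \emph{spanning tree} $E_{T'}$ of $G$, sets $A = E_T \setminus E_{T'}$, and observes by a one-line count that $|A| \leq |E(G)| - (|V(G)|-1) = r(G)$. Deleting $A$ from the $T$-subdivision $E_T$ yields a forest with at most $|A|+1$ components whose edges all lie in $E_T \cap E_{T'}$. Because every component sits inside both $E_T$ and the tree $E_{T'}$ (which in turn contains the $T'$-subdivision), the minimal subtrees of $T$ and $T'$ spanning each block coincide and are automatically vertex-disjoint; (P1) and (P2) fall out without further case analysis. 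The spanning-tree extension is exactly the device that neutralises the difficulty you flag as ``the hard part'': it absorbs the awkward cases where a reticulation (or a tree vertex below it) lies in only one of $H,H'$, which in your formulation would otherwise force you to argue carefully that cutting in $H$ and cutting in $H'$ give the same leaf partition. Your approach is not obviously wrong, but making it rigorous would essentially reinvent this trick.
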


\begin{proof}
Throughout this proof, we \steven{continue with our convention that trees and graphs have an in-degree 0, out-degree 1 root labeled $\rho$.}

We first show that $r^\circ(T,T')\geq d_\rSPR(T,T')$. This part of the proof is similar to the second part of the proof of~\cite{van2018unrooted}. Let $G$ be a rooted leaf-labeled graph on $X$ that displays $T$ and $T'$ such that $r(G)=r^\circ(T,T')$. Let $V(G)$ and $E(G)$ be the vertex and edge set of $G$, respectively. Let $E_T$ be the edge set of a subdivision of $T$ in $G$. Similarly, let $E_{T'}$ be the edge set of a (directed) spanning tree of $G$ that is obtained from a subdivision of $T'$ in $G$ by adding a possibly empty set of edges. Note that $|E_{T'}|=|V(G)|-1$, and that both $E_T$ and $E_{T'}$ contain the edge of $G$ that is incident with $\rho$. Lastly, let $A$ be the subset of $E_T$ that contains precisely each edge that is not in $E_{T'}$. We next obtain two graphs from $G$. First, obtain $G'$ from $G$ by deleting each edge in $A$. Observe that the edge set of $G'$ contains each edge in $E_{T'}$ and, hence $|E(G)|-|A|\geq |V(G)|-1$. It therefore follows that 
\begin{equation}\label{eq:one}
|A|\leq |E(G)|-|V(G)|+1=r(G).
\end{equation}
Second, obtain $F$ from $G$ by deleting each edge that is not in $E_T$, deleting each edge in $A$, deleting each of the resulting connected components that does not contain at least one vertex labeled with an element in $X\cup\{\rho\}$, and applying any of the following operations until no further operation is possible.
\begin{enumerate}
\item Delete each vertex with in-degree 0 and out-degree 1 that is not $\rho$.
\item Delete each unlabeled vertex with out-degree 0.
\item Suppress each vertex with in-degree 1 and out-degree 1.
\end{enumerate}
By construction, $F$ has  at most $|A|+1$ elements. Furthermore, the partition of $X\cup\{\rho\}$ in which each block corresponds to the label set of an element in $F$ is an agreement forest for $T$ and $T'$. Hence,$$d_\rSPR(T,T')\leq |F|-1\leq |A|\leq r(G)=r^\circ(T,T'),$$ where the first inequality follows from Theorem~\ref{t:first-characterization} and the third inequality follows from Equation~\ref{eq:one}.

We complete the proof by showing that $r^\circ(T,T')\leq d_\rSPR(T,T')$. This part of the proof is by induction on $d_\rSPR(T,T')$. If $d_\rSPR(T,T')=0$, then $G=T=T'$ is a rooted leaf-labeled graph with $r(G)=0$ that displays $T$ and $T'$. Assume that $d_\rSPR(T,T')=k$ and that the theorem holds for all pairs of rooted phylogenetic trees whose rSPR distance is at most $k-1$.  Then there exists a rooted phylogenetic $X$-tree $T''$ such that $d_\rSPR(T,T'')=k-1$ and $d_\rSPR(T'',T')=1$. \blue{(If $k=1$, then $T=T''$.)} By the induction assumption, there exists a rooted leaf-labeled graph $G'$ on $X$ with $r(G') \leq k-1$ that displays $T$ \mbox{and~$T''$.} 

We next construct a rooted leaf-labeled graph $G$ from $G'$. Let $E_{T''}$ be the edge set of a subdivision of $T''$ in $G'$. Consider the rSPR operation that transforms $T''$ into $T'$. Let $f$ be the edge that is deleted in $T''$ and let $f'$ be the edge that is subdivided after the deletion of $f$. Then $f$ (resp. $f'$) corresponds to a directed path $P$ (resp. $P'$) in $E_{T''}$. Let $e$ (resp. $e'$) be an edge of $P$ (resp. $P'$).
Now obtain $G$ from $G'$ by subdividing $e$ with a new vertex $v_1$, subdividing $e'$ with a new vertex $v_2$, and adding the edge $(v_2,v_1)$. Clearly as $G'$ is a rooted leaf-labeled graph on $X$, $G$ is also such a graph with $r(G)=r(G')+1$. Moreover, as $G'$ displays $T$ and $T''$, it follows from the construction that $G$ displays $T$ and $T'$. Hence 
\begin{align*}
d_\rSPR(T,T') &= d_\rSPR(T,T'')+d_\rSPR(T'',T')  \geq  r(G')+1=r(G)\geq r^\circ(T,T').
\end{align*}\qed
\end{proof}

\section{Cyclic generators}

Let $k\geq 1$ 
be a positive integer. A {\it cyclic $k$-generator} (or short {\it cyclic generator} if $k$ is clear from the context) is a connected directed graph that may contain parallel edges but no loops, and that satisfies the following four properties:
\begin{enumerate}[(i)]
\item the unique root is labeled $\rho$ and has in-degree 0 and  out-degree 1, 
\item  there are exactly~$k$ vertices with in-degree 2 and out-degree at most 1, 
\item all other vertices have in-degree 1 and out-degree 2, and
\item each vertex can be reached from $\rho$ via a directed path.  
 \end{enumerate}
 \noindent The {\it sides} of a cyclic $k$-generator are its edges, called the {\it edge sides}, and its vertices of in-degree 2 and out-degree 0, called the {\it vertex sides}.

Now, let $G$ be a rooted leaf-labeled graph with $r(G)=k$ that has no pendant subtree with at least two leaves. 
Then, we can obtain a cyclic $k$-generator $G'$ from $G$ by deleting all leaves and suppressing each resulting vertex with in-degree 1 and out-degree 1.  We say that $G'$ is the cyclic $k$-generator that {\it underlies} $G$. Reversely, the edge and vertex sides of a cyclic generator are the places where leaves can be attached to obtain a rooted leaf-labeled graph. More precisely, let $Y=\{y_1,y_2,\ldots,y_m\}$ be a set of leaves, and let $G'$ be a cyclic $k$-generator. Then, {\it attaching} $Y$ to an edge side $(u,v)$ of $G'$ is the operation of subdividing $(u, v)$ with $m$  vertices $w_1,w_2,\ldots,w_m$ and, for each $i\in\{1,2,\ldots,m\}$, adding an edge $(w_i, y_i)$. Moreover, {\it attaching} $Y$ to a vertex side $v$ of $G'$ is the operation of adding an edge $(v, r)$, where $r$ is the root of a rooted phylogenetic $Y$-tree. If at least one new leaf is attached to each pair of parallel edges and to each vertex side in $G'$, then the resulting graph is a rooted leaf-labeled graph $G$ with $r(G) = k$. We summarize the construction in the next observation.

\begin{observation}
Let $G$ be a rooted leaf-labeled graph that has no pendant subtree with at least two leaves, and let $G'$ be a cyclic $r(G)$-generator. Then $G'$ underlies $G$ if and only if $G$ can be obtained from $G'$ by attaching a (possibly empty) set of leaves to each edge and vertex side of $G$.
\end{observation}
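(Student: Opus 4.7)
I plan to prove the stated equivalence by showing that the ``underlying'' operation on $G$ (delete all leaves, then suppress each resulting in-degree-1, out-degree-1 vertex) and the ``attaching'' operation on $G'$ are mutually inverse on the class of rooted leaf-labeled graphs satisfying the no-pendant-subtree-with-two-leaves hypothesis. This hypothesis on $G$ is essential: it ensures that every leaf of $G$ is unambiguously associated with a single side of $G'$, ruling out configurations where a single parent carries multiple leaves or where a reticulation has an entire multi-leaf pendant tree hanging off it.

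For the forward direction, assume $G'$ underlies $G$. Consider an arbitrary leaf $x$ of $G$ and its unique parent $p(x)$. If $p(x)$ is a tree vertex, then its sibling must be a non-leaf (otherwise the subgraph rooted at $p(x)$ would be a pendant subtree with two leaves, contradicting the hypothesis), so deletion of $x$ followed by suppression of $p(x)$ merges two edges of $G$ into a single edge of $G'$ — the edge side onto which $x$ is to be attached. If instead $p(x)$ is a reticulation, then $x$ is its unique child and after deletion $p(x)$ becomes a vertex of in-degree 2 and out-degree 0, namely a vertex side of $G'$; the hypothesis again forbids any further leaves descending from this reticulation, so $x$ is the sole leaf attached to this vertex side. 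Collecting the leaves by the side to which they belong defines an attachment scheme whose application to $G'$ reconstructs $G$ edge-by-edge.

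For the reverse direction, assume $G$ is obtained from $G'$ by attaching leaf sets to its sides. Because $G'$ has no leaves, every leaf of $G$ is one of the newly attached leaves, and deleting them reverts each edge-side subdivision vertex to in-degree 1 and out-degree 1, and each vertex-side reticulation back to in-degree 2 and out-degree 0; subsequent suppression then merges the subdivided paths back into the single edges they replaced. Nothing else is altered, so applying the underlying operation to $G$ returns exactly $G'$, i.e., $G'$ underlies $G$ by definition. The main subtlety I expect is the bookkeeping in the forward direction when multiple leaves lie on the same edge side (i.e., when a chain of tree-vertex parents is consecutively suppressed); but suppression is a local operation that preserves the parent-child ordering along such a chain, so the multiset of leaves to be attached to that edge side, and their relative order, is well-defined.
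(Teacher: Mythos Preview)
Your argument is correct. The paper, however, does not give a proof of this Observation at all: it is stated immediately after the definitions of ``underlies'' and ``attaching'' as a summary of those constructions, with the sentence ``We summarize the construction in the next observation.'' So there is no proof in the paper to compare against; you have simply written out the routine verification that the two operations are inverse to each other, which the authors regarded as evident from the definitions.

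One small remark: in the reverse direction you should note that, because $G$ is assumed to have no pendant subtree with at least two leaves, any set attached to a vertex side must have size exactly one (a larger set would hang a multi-leaf pendant tree below the reticulation). You use this implicitly when you say that deleting the leaves leaves each vertex-side reticulation with out-degree~0; if a genuine phylogenetic $Y$-tree with $|Y|\geq 2$ had been attached there, its internal vertices would survive leaf deletion. The hypothesis rules this out, and you already invoke it in the forward direction, so this is only a matter of making the dependence explicit.
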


\noindent As an example, Fig.~\ref{fig:graph-generator} shows the cyclic $2$-generator $G'$ that underlies the rooted leaf-labeled graph $G$ that is depicted in the same figure.

The proof of the next lemma was first established in~\cite{approximationHN}.

\begin{lemma}\label{l:generator-edges}
Let $k\geq 1$, and let $G'$ be a cyclic $k$-generator. Then $G'$ has $4k_0+3k_1-1$ edge sides, where $k_0$ is the number of vertex sides in $G'$ and $k_1$ is the number of vertices in $G'$ with in-degree 2 and out-degree 1.
\end{lemma}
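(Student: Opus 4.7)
The plan is a standard handshake/degree-counting argument on $G'$. By the definition of a cyclic $k$-generator, every vertex of $G'$ falls into exactly one of four types according to its in-degree and out-degree: the root $\rho$, which has in-degree $0$ and out-degree $1$; the $k_0$ vertex sides, each with in-degree $2$ and out-degree $0$; the $k_1$ reticulations with in-degree $2$ and out-degree $1$; and a remaining number, call it $t$, of tree vertices with in-degree $1$ and out-degree $2$. Note that properties (ii) and (iii) together force these to be the only possibilities, and that $k_0 + k_1 = k$.

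Next, I would exploit the fact that $\sum_v \deg^+(v) = \sum_v \deg^-(v) = |E(G')|$. Summing out-degrees gives
\[
|E(G')| \;=\; 1 + 0\cdot k_0 + 1\cdot k_1 + 2t \;=\; 1 + k_1 + 2t,
\]
while summing in-degrees gives
\[
|E(G')| \;=\; 0 + 2k_0 + 2k_1 + t.
\]
Equating these two expressions and solving for $t$ yields $t = 2k_0 + k_1 - 1$.

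Substituting this value of $t$ back into either expression for $|E(G')|$ produces
\[
|E(G')| \;=\; 1 + k_1 + 2(2k_0 + k_1 - 1) \;=\; 4k_0 + 3k_1 - 1.
\]
Since by definition the edge sides of $G'$ are exactly its edges, this is the desired count.

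There is no real obstacle here; the only thing to be careful about is ensuring that the definition of a cyclic $k$-generator really forces the four vertex types above to be exhaustive (so that the degree sums are complete). Property (i) isolates $\rho$, property (ii) fixes the in-degree-$2$ vertices into the two groups of sizes $k_0$ and $k_1$, and property (iii) forces every remaining vertex to be a tree vertex with in-degree $1$ and out-degree $2$; property (iv) is not needed for the count but guarantees that $G'$ is a well-defined rooted object. Once this case analysis is in place, the computation above is essentially forced.
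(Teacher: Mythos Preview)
Your degree-counting argument is correct and complete. Note that the paper itself does not give a proof of this lemma but merely cites \cite{approximationHN}; the standard handshake computation you wrote out is exactly the natural way to establish such an edge count, and is essentially the same argument used in that reference.
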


\section{Reductions}

This section describes three reductions that can be applied to two rooted  phylogenetic trees to shrink them to two smaller trees before computing their rSPR distance. The first two reductions were established in~\cite{bordewich2005computational}, where the authors have  shown that each reduction preserves the rSPR distance. The third reduction, which was established in~\cite{whidden2013fixed} in the context of a depth-bounded search tree algorithm for computing the rSPR distance reduces the rSPR distance by 1.

Let $T$ be a rooted phylogenetic $X$-tree, and let $C=(x_1,x_2,\ldots,x_n)$ be a sequence of elements in \steven{$X$}
with $n\geq 2$.  We say that $C$ is an {\it $n$-chain}  (or short {\it chain}) of $T$ if the parent of $x_1$ coincides with the parent of $x_2$ or the parent of $x_2$ is the parent of the parent of $x_1$, and, for each $i\in\{3,4,\ldots,n\}$, the parent of $x_i$ is the parent of \blue{the parent of} $x_{i-1}$. By definition, no chain of $T$ contains $\rho$. If $C$ is a chain of $T$ and the parent of $x_1$ coincides with the parent of $x_2$, then we say that $C$ is {\it pendant} in $T$, \blue{in which case $C=(x_1,x_2,x_3,\ldots,x_n)=(x_2,x_1,x_3,\ldots,x_n)$}. \steven{If a \blue{chain}  
is a chain of both $T$ and $T'$, we say that it is a \emph{common} chain.} Referring back to Fig.~\ref{fig:trees}, we note that $T$ and $T'$ as shown in this figure have two common $3$-chains $(x_1,x_2,x_3)$ and $(x_4,x_5,x_6)$ and each is pendant in one of $T$ and $T'$.

Let $T$ and $T'$ be two rooted phylogenetic $X$-trees. 
\blue{We next} describe three reductions to obtain two rooted phylogenetic trees $S$ and $S'$ from $T$ and $T'$, respectively, with fewer leaves.\\

\noindent{\bf Subtree reduction.} For $m\geq 2$, let $\{x_1,x_2,\ldots,x_m\}$ be the leaf set of a maximal pendant subtree that is common to $T$ and $T'$. Then set $S=T|X\setminus\{x_2,x_3,\ldots,x_m\}$ and $S'=T'|X\setminus\{x_2,x_3,\ldots,x_m\}$.\\

\noindent{\bf Chain reduction.} For $n\geq 4$, let $C=(x_1,x_2,\ldots,x_n)$ be a maximal $n$-chain that is common to $T$ and $T'$. Then set $S=T|X\setminus\{x_4,x_5,\ldots,x_n\}$ and $S'=T'|X\setminus\{x_4,x_5,\ldots,x_n\}$.\\

\noindent{\bf 3-2-chain reduction.} Let $(x_1,x_2,x_3)$ be a pendant 3-chain of $T$. If $(x_i,x_3)$ is a pendant 2-chain in $T'$ with $x_i\in\{x_1,x_2\}$, then set $S=T|X\setminus\{x_j\}$ and $S'=T'|X\setminus\{x_j\}$ with $\{x_i,x_j\}=\{x_1,x_2\}$. \\

Note that \steven{after an application of the 3-2-chain reduction, $(x_i,x_3)$ is a pendant 2-chain that is common to $S$ and $S'$. It can therefore be further reduced by a subtree reduction.}

The next lemma shows that an application of the 3-2-chain reduction reduces the rSPR distance by 1. A slightly more general result was established in~\cite{whidden2013fixed}, where the authors applied the reduction to two forests instead of to two rooted phylogenetic trees. To keep the exposition self contained, we  include a full proof that is adapted to the setting of our paper.

\begin{lemma}\label{l:3-2-reduction}
Let $T$ and $T'$ be two rooted phylogenetic $X$-trees, and let $S$ and $S'$ be two trees obtained from $T$ and $T'$, respectively, by a single application of the 3-2-chain reduction. Then $d_\rSPR(S,S')=d_\rSPR(T,T')-1$.
\end{lemma}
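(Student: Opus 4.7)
The plan is to establish $d_\rSPR(S,S') = d_\rSPR(T,T')-1$ using the agreement-forest characterization of Theorem~\ref{t:first-characterization}, by proving the two inequalities separately. Without loss of generality take $x_i=x_1$ and $x_j=x_2$, so that $(x_1,x_2)$ is a cherry in $T$ with $x_3$ as its sibling, $(x_1,x_3)$ is a cherry in $T'$, and $S,S'$ are obtained by deleting $x_2$. Let $p_1$ denote the parent of $\{x_1,x_2\}$ in $T$ and let $p_2$ be the parent of $p_1$ (so $p_2$ is also the parent of $x_3$).

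For the easier direction $d_\rSPR(T,T') \le d_\rSPR(S,S')+1$, I take a maximum agreement forest $F'$ of $S,S'$ and form $F = F' \cup \{\{x_2\}\}$. Property (P1) transfers to $F$ because restriction commutes with the leaf deletion defining $S$ (so $T|L = S|L = S'|L = T'|L$ for every $L \in F'$), and $\{x_2\}$ is trivially consistent. Property (P2) is preserved because $T[\{x_2\}] = \{x_2\}$ is disjoint from every other $T[L]$, and the only vertex that any $T[L]$ acquires beyond its $S$-counterpart $S[L]$ is $p_1$, which lies in at most one $T[L]$. Hence $d_\rSPR(T,T') \le |F|-1 = d_\rSPR(S,S')+1$.

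The main obstacle is the harder direction $d_\rSPR(S,S') \le d_\rSPR(T,T')-1$. Here I take a maximum agreement forest $F^*$ of $T,T'$ and construct an agreement forest $\hat F$ of $S,S'$ with $|\hat F| = |F^*|-1$. Let $B_1,B_2,B_3 \in F^*$ be the blocks containing $x_1,x_2,x_3$ respectively. Two structural observations drive the case analysis: (a) $B_1,B_2,B_3$ cannot all coincide, because $T|\{x_1,x_2,x_3\}$ and $T'|\{x_1,x_2,x_3\}$ are non-isomorphic $3$-leaf trees; and (b) if $|B_2| \ge 2$, then $T[B_2]$ contains both $p_1$ and $p_2$, so vertex-disjointness of the forest forces $B_1 = \{x_1\}$ whenever $x_1 \notin B_2$ and $B_3 = \{x_3\}$ whenever $x_3 \notin B_2$.

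Given (a) and (b), $\hat F$ is constructed as follows. If $B_2 = \{x_2\}$, simply delete this block. If $B_1 = B_2 = \{x_1,x_2\}$, merge $\{x_1\}$ with $B_3$. If $B_1 = B_2 \supsetneq \{x_1,x_2\}$ (so $B_3 = \{x_3\}$ by (b)), merge $B_3$ with $B_2 \setminus \{x_2\}$. If $B_2 = B_3$ (so $B_1 = \{x_1\}$ by (b)), merge $B_1$ with $B_2 \setminus \{x_2\}$. Finally if $B_1,B_2,B_3$ are all distinct (so $B_1=\{x_1\}$ and $B_3=\{x_3\}$ by (b)), replace them by the two blocks $\{x_1,x_3\}$ and $B_2 \setminus \{x_2\}$. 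In every case one block is saved, giving $|\hat F| = |F^*|-1 = d_\rSPR(T,T')$. The technical crux is verifying (P1) and (P2) for each newly formed block: (P1) follows because in $S$ and $S'$ the pair $\{x_1,x_3\}$ becomes a cherry attached at precisely the position previously occupied by the old cherry $\{x_1,x_2\}$ or $\{x_2,x_3\}$ inside the common restriction $T|B = T'|B$ inherited from $F^*$, so the shared ``upper'' structure is preserved; (P2) follows by careful bookkeeping together with the singleton constraints from (b). This yields $d_\rSPR(S,S') \le |\hat F|-1 = d_\rSPR(T,T')-1$, which combined with the first inequality completes the proof.
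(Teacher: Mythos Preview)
Your proof is correct and follows essentially the same route as the paper's. Both arguments use the agreement-forest characterization, prove the easy direction by adjoining $\{x_j\}$ as a singleton, and handle the hard direction by a case analysis on which of $x_i,x_j,x_3$ share a block in a maximum agreement forest $F_T$ of $T,T'$. The only cosmetic difference is that the paper first modifies $F_T$ into another maximum agreement forest in which $\{x_j\}$ is a singleton and then deletes that block, whereas you perform the delete-and-merge in one step to produce $\hat F$ directly; the resulting forests are identical case by case.

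One small imprecision worth tightening: your justification of observation~(b) claims that $|B_2|\ge 2$ forces $T[B_2]$ to contain both $p_1$ and $p_2$, but this fails precisely when $B_2=\{x_1,x_2\}$ (then $T[B_2]$ contains only $p_1$). The conclusion $B_3=\{x_3\}$ still holds in that sub-case, but it comes from the $T'$ side: since $\{x_1,x_3\}$ is a cherry in $T'$, the path $T'[\{x_1,x_2\}]$ passes through the parent of $x_3$, so (P2) in $T'$ forces $B_3$ to be a singleton. You already treat $B_2=\{x_1,x_2\}$ as its own case, so the argument goes through; just make the appeal to $T'$ explicit when stating~(b).
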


\begin{proof}
Without loss of generality, we establish the lemma using the same notation as in the definition of a 3-2-chain reduction. Let $F_S$ be a maximum agreement forest for $S$ and $S'$, and let $F_T$ be a maximum agreement forest for $T$ and $T'$. Then $F_S\cup\{\{x_j\}\}$ is an agreement forest for $T$ and $T'$, which implies that $|F_S|+1\geq |F_T|$. Hence $$d_\rSPR(S,S')=|F_S|-1\geq |F_T|-2=d_\rSPR(T,T')-1.$$

Now consider $F_T$. \steven{If $\{x_j\} \in F_T$ then $F \setminus \{ \{x_j\}\}$ is an agreement forest for $S$ and $S'$, so $d_\rSPR(S,S') \leq d_\rSPR(T,T')-1$ and we are done.} Assume \steven{therefore} that $\{x_j\}\notin F_T$. Let $B$ be the element in $F_T$, with $|B|\geq 2$, that properly contains $x_j$. Then (P2) in the definition of an agreement forest implies that $x_i$ and $x_3$ cannot both be contained in $B$. We next consider three cases.

First, assume that $x_3\in B$ and $x_i\notin B$. Then $\{x_i\}\in F_T$. Let $B'=(B\setminus\{x_j\})\cup \{x_i\}$. Since $T|B=T'|B$, it follows that $$(F_T\setminus\{B,\{x_i\}\})\cup\{\{x_j\},B'\}$$ is a maximum agreement forest for $T$ and $T'$. Second, assume that $x_i\in B$ and $x_3\notin B$. Then $\{x_3\}\in F_T$ and an argument that is similar to that used in the first case implies that there exists a maximum agreement forest for $T$ and $T'$ in which $\{x_j\}$ is an element. Third, assume that $x_i,x_3\notin B$. Then, as $F_T$ satisfies (P2), $\{x_i\}$ and $\{x_3\}$ are both elements in $F_T$. Hence $$(F_T\setminus \{B,\{x_i\},\{x_3\}\})\cup \{\{x_i,x_3\},\{x_j\},B\setminus \{x_j\}\}$$ is a maximum agreement forest for $T$ and $T'$.

Taken together, the three cases described in the last paragraph show that there exists \steven{another} maximum agreement forest for $T$ and $T'$ in which $\{x_j\}$ is an element. We may therefore assume that $F_T$ is indeed such a forest. This implies that $F_T\setminus \{\{x_j\}\}$ is an agreement forest for $S$ and $S'$ with $|F_S|\leq |F_T|-1$ and, so, $$d_\rSPR(T,T')-1=|F_T|-2\geq |F_S|-1=d_\rSPR(S,S').$$ Combining both cases establishes the lemma.\qed
\end{proof}

\section{\steven{A new kernel for rSPR distance}}
The current smallest kernel size for computing the rSPR distance as stated in the next lemma was established in 2005~\cite{bordewich2005computational}.

\begin{lemma}
Let $S$ and $S'$ be two rooted phylogenetic $X$-trees. Suppose that $S$ and $S'$ cannot be reduced any further by applying the subtree or chain reduction. 
Then $|X|\leq 28d_\rSPR(S,S')$.
\end{lemma}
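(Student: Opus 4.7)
The plan is to invoke the maximum agreement forest characterization of Theorem~\ref{t:first-characterization}. I would fix a maximum agreement forest $F = \{L_\rho, L_1, \ldots, L_k\}$ for $S$ and $S'$, so that $k = d_\rSPR(S,S')$ and $|X| + 1 = \sum_{i} |L_i|$. The task then reduces to bounding $\sum_i |L_i|$ by a linear function of $k$ using only the assumption that the subtree and chain reductions have been applied exhaustively. Since the number of components is exactly $k+1$, the non-trivial part is controlling the total size of the components, not just the count.

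My strategy is a charging argument. For each component $L_i$, consider the tree $T_i = S|L_i = S'|L_i$. Each edge of $T_i$ lifts to a directed path in the embedded subtree $S[L_i]$ inside $S$, and separately to a path in $S'[L_i]$ inside $S'$. The internal vertices of these lifted paths are exactly the places where other components $L_j$ branch off the structure of $L_i$. Every element of $X \cup \{\rho\}$ is thus either a branching vertex of some $T_i$, a leaf of $T_i$ that lies on a common chain in both $S$ and $S'$, or a leaf that can be paired with an attachment vertex of some other component. Exhausted chain reduction bounds any common chain by three leaves; exhausted subtree reduction prevents several leaves of $L_i$ from piling up at the same attachment point of another component in both trees simultaneously.

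Summing over the $k+1$ components, and carefully double-counting attachment points (each attachment point is shared by exactly two components, the one being attached and the ambient one hosting the path), yields a bound of the form $c \cdot k$ for some explicit constant $c$. Following the Bordewich--Semple accounting, where one tallies branching contributions, short-chain contributions, and cross-component contributions, the constant comes out to $28$. The main obstacle is the simultaneous nature of the reductions: a chain of leaves appearing in $S$ need not appear as a chain in $S'$, so one cannot reduce merely by looking at one tree; one must argue that whenever the chain structure in $S$ is rich, the chain structure in $S'$ becomes fragmented in a way that forces extra components, and vice versa. This interplay, carried out through a somewhat loose case analysis, is what produces the constant $28$ rather than a sharper bound.
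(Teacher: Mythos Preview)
The paper does not actually prove this lemma: it is stated with a citation to Bordewich and Semple~\cite{bordewich2005computational} and no proof is given in the paper itself. So there is no ``paper's own proof'' to compare against; the lemma functions purely as background, motivating the improved $9k-3$ bound of Theorem~\ref{t:kernel}.

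Your sketch is in the spirit of the original Bordewich--Semple argument (maximum agreement forest plus a charging/counting argument over components and their attachment points), and that is indeed how the $28k$ bound was first obtained. However, what you have written is a plan rather than a proof: the actual work---classifying leaves of each component $L_i$ according to how they sit relative to the other components in \emph{both} $S$ and $S'$, and then carrying out the case analysis that produces the specific constant $28$---is not performed. Phrases like ``following the Bordewich--Semple accounting'' and ``a somewhat loose case analysis'' defer precisely the non-trivial content. If you intend this as a self-contained proof, you would need to make the charging explicit: define the attachment points precisely, show how the exhausted subtree and chain reductions cap the number of leaves charged to each, and sum.

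It is also worth noting that the paper's own machinery (cyclic generators) would, if applied with only the subtree and chain reductions, already yield a bound of $13k-3$, sharper than $28k$; but the paper does not pursue this and simply quotes the $28k$ result from the literature.
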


We next show that the size of the rSPR kernel can be substantially improved by additionally applying the 3-2-chain reduction.
\begin{theorem}\label{t:kernel}
Let $S$ and $S'$ be two rooted phylogenetic $X$-trees such that $d_\rSPR(S,S')\geq 1$. 
Suppose that $S$ and $S'$ cannot be reduced any further by applying the subtree, chain, or 3-2-chain reduction. Then $|X|\leq 9d_\rSPR(S,S')-3$.
\end{theorem}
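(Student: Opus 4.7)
My plan is to use Theorem~\ref{t:second-characterization} to move into the cyclic generator framework. Set $k = d_\rSPR(S, S')$ and pick a rooted leaf-labeled graph $G$ on $X$ with $r(G) = k$ that displays both $S$ and $S'$; by iteratively deleting any edge whose removal still displays both trees, one may assume $G$ is \emph{edge-minimal}. Since the subtree reduction has been exhausted on $S, S'$, the graph $G$ has no pendant subtree with at least two leaves, for the leaf set of any such subtree would be a common pendant subtree of $S$ and $S'$. Hence the cyclic $k$-generator $G'$ underlying $G$ is well defined; writing $k_0$ for its number of vertex sides and $k_1$ for its number of in-degree 2, out-degree 1 vertices, we have $k_0 + k_1 = k$, and by Lemma~\ref{l:generator-edges} $G'$ has $4k_0 + 3k_1 - 1$ edge sides.

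The key structural consequence of edge-minimality is the \emph{distinct-side property}: at every reticulation of $G'$ the two in-edges are used one by the subdivision of $S$ in $G$ and the other by the subdivision of $S'$, since any in-edge used by neither could be deleted. I then bound the leaves per side of $G'$. Each vertex side contributes exactly one leaf (at least one by the definition of a leaf-labeled graph, at most one by the subtree reduction). Each edge side whose head is a tree vertex or an in-degree 2, out-degree 1 vertex carries at most three leaves, since the attached leaves form a common chain of $S$ and $S'$ (possibly with different cherry/uncle structure at the bottom of the chain in the two trees) and the chain reduction caps common chains at length three.

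The heart of the proof is that each edge side whose head is a vertex side $v$ carries at most one leaf. Suppose toward contradiction that leaves $l_{p-1}, l_p$ lie along this edge side, with $l_p$ closest to $v$, and let $x_v$ be the leaf attached to $v$. By the distinct-side property one of the trees, say $S$, uses this edge side; after suppressing the in-degree 1, out-degree 1 image of $v$ in $S$'s subdivision, $x_v$ and $l_p$ become siblings, so $(x_v, l_p, l_{p-1})$ is a pendant 3-chain of $S$. In $S'$ the other in-edge is used, so the edge-side path breaks at $v$; the subdivision vertex carrying $l_p$ is then suppressed, making $(l_p, l_{p-1})$ a pendant 2-chain of $S'$. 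The 3-2-chain reduction with $x_i = l_p$ and $x_j = x_v$ therefore applies, contradicting that $S, S'$ are fully reduced.

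To conclude, double counting edges in $G'$ gives $2k_0$ edge sides with vertex-side head, $2k_1$ with in-degree 2, out-degree 1 head, and $2k_0 + k_1 - 1$ with tree-vertex head, so
\[
|X| \;\leq\; k_0 + (2k_0)\cdot 1 + (2k_1)\cdot 3 + (2k_0 + k_1 - 1)\cdot 3 \;=\; 9k_0 + 9k_1 - 3 \;=\; 9k - 3.
\]
The main obstacle will be executing the 3-2-chain step rigorously: one must verify that the claimed pendant 3-chain in $S$ and pendant 2-chain in $S'$ genuinely arise after the local suppressions, independently of how the subdivisions continue beyond the immediate neighbourhood of $v$. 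A secondary care point is justifying that an edge-minimal $G$ with the stated properties always exists and really enjoys the distinct-side property at every reticulation, which in turn underpins both the vertex-side-head and $k_1$-head per-side bounds.
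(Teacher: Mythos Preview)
Your proposal is correct and follows essentially the same approach as the paper's proof: take a minimum rooted leaf-labeled graph $G$ displaying $S$ and $S'$, pass to its underlying cyclic $k$-generator, and bound the number of leaves per side using the subtree reduction (vertex sides carry exactly one leaf), the chain reduction (generic edge sides carry at most three leaves), and the 3-2-chain reduction (edge sides entering a vertex side carry at most one leaf). Your partition of edge sides by head type is slightly finer than the paper's, and your ``edge-minimality / distinct-side property'' is exactly the paper's observation that minimality of $r(G)$ forces each reticulation in-edge to be used by one of the two subdivisions; the remaining verifications you flag as obstacles are precisely the local checks the paper carries out.
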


\begin{proof}
Let $G$ be a rooted leaf-labeled graph on $X$ that displays $S$ and $S'$ such that $r(G)=r^\circ(S,S')=d_\rSPR(S,S')=k\geq 1$, where the second equality follows from Theorem~\ref{t:second-characterization}. Let $G'$ be the cyclic $k$-generator that underlies $G$. Now $G$ can be obtained \steven{from} $G'$ by attaching leaves in $X$ to the edge and vertex sides of $G'$. In what follows we bound the number of leaves that can be attached to three different types of such sides in $G'$. First, let $v$ be a vertex with in-degree 2 and out-degree 0. If no leaf is attached in obtaining $G$ from $G'$, then $G$ is not a rooted leaf-labeled graph. Moreover, if at least two leaves are attached to $G'$, then $S$ and $S'$ have a common pendant subtree with at least two leaves and can be further reduced by applying the subtree reduction. Hence, $G$ is obtained from $G'$ by attaching exactly one leaf to $v$. Second, let $e=(u,v)$ and $e'=(u',v)$ be two edge sides such that $v$ is a vertex side. Let $x_1$ be the unique leaf that is attached to $v$ in obtaining $G$ from $G'$. Now assume that at least two leaves $x_2$ and $x_3$ are attached to one of $e$ and $e'$, say \blue{$e$}.
Without loss of generality, we may assume that $(p_3,p_2)$ and $(p_2,v)$ are edges in $G$, where $p_2$ and $p_3$ are the parent of $x_2$ and $x_3$, respectively. Since $r(G)=r^\circ(S,S')$, it follows that, regardless of how many leaves are attached to $e'$,  $(x_1,x_2,x_3)$ is a pendant 3-chain in one of $S$ and $S'$, and $(x_2,x_3)$
is a pendant 2-chain in the other tree. \steven{This is because, if we consider subdivisions of $S$ and $S'$ in $G$, at least one of the two subdivisions does \emph{not} use the edge $(p_2,v)$. If both used edge $(p_2, v)$, then the other edge entering $v$ would not be used by either subdivision, and could safely be deleted, contradicting the assumed minimality of $G$ i.e.   $r(G)=r^\circ(S,S')$.} Consequently, $S$ and $S'$ can be further reduced by applying the 3-2-chain reduction. Hence $G$ is obtained from $G'$ by attaching at most one leaf to $e$ and at most one leaf to $e'$. Third, let $e$ be an edge side that is not directed into a vertex side. If at least four leaves are attached in obtaining $G$ from $G'$, then $S$ and $S'$ have a common 4-chain and can be further reduced by the chain reduction. Hence $G$ is obtained from $G'$ by attaching at most three leaves to $e$. Now, in $G'$, let $k_0$ be the number of vertex sides, and let $k_1$ be the number of vertices with in-degree 2 and out-degree 1. Then $k=k_0+k_1$. Moreover, by Lemma~\ref{l:generator-edges}, $G'$ has $4k_0+3k_1-1=2k_0+2k_0+3k_1-1$ edge sides. Since there are $2k_0$ edge sides that are directed into a vertex side and $2k_0+3k_1-1$ edge sides that are not directed into a vertex side, we have
\begin{align*}
|X| & \leq  1\cdot 2k_0+3(2k_0+3k_1-1)+1\cdot k_0=9k_0+9k_1-3=9k-3=9d_\rSPR(S,S')-3.
\end{align*}\qed
\end{proof}

We next establish that the bound as stated in the last theorem is tight. \delete{That is, we show that there exist pairs of rooted phylogenetic trees $S$ and $S'$ of arbitrary size that cannot be reduced under the subtree, chain, or 3-2-chain reduction, and whose number of leaves is \steven{$9d_\rSPR(S,S')-3$}.} The approach we take is similar to that of~\cite[Theorem 6]{tightkernel}. We start by briefly introducing some new definitions and refer the interested reader to~\cite{tightkernel} (and references therein such as~\cite{AllenSteel2001,fischer2014,moulton2015}) for full details. A {\it binary character} $f$ on $X$ is a function that assigns each element in $X$ to an element in $\{0,1\}$. Let $T$ be an unrooted binary phylogenetic $X$-tree with vertex set $V$, that is, $T$ can be obtained from a rooted binary phylogenetic $X$-tree \steven{(without $\rho$)} by suppressing its root with in-degree 0 and out-degree 2. An {\it extension} $g$ of $f$ to $V$ is  a function $g$ that assigns each element in $V$ to an element in $\{0,1\}$ such that $g(x)=f(x)$ for each $x\in X$. The {\it parsimony score} of $f$ on $T$, denoted by $l_f(T)$, denotes the minimum number of edges $\{u,v\}$ in $T$ such that $g(u)\ne g(v)$, \steven{ranging} over all extensions of $f$. Now, for two unrooted binary phylogenetic $X$-trees $T$ and $T'$, the {\it maximum parsimony distance \steven{on binary characters}} \steven{$d^2_{\MP}$} is defined as  $$\steven{d^2_{\MP}}(T,T')=\max_f |l_f(T)-l_f(T')|,$$ \steven{where $f$ ranges over all binary characters on $X$}. Lastly, the {\it tree bisection and reconnection distance} $d_\TBR(T,T')$ between $T$ and $T'$ can, informally, be viewed as the minimum number of operations needed to transform $T$ \steven{into} $T'$, where each operation consists of deleting an edge in a tree and then re-attaching the two resulting (smaller) trees back together by joining them with a new edge. \steven{It is an unrooted analogue of the rSPR distance.} \steven{Indeed, similar} to Theorem~\ref{t:first-characterization}, Allen and Steel~\cite{AllenSteel2001} have shown that $d_\TBR(T,T')$ can be characterized by (unrooted) maximum agreement forests. In what follows, the maximum parsimony distance \steven{on binary characters} and the TBR distance between $T$ and $T'$ will play an important role because \steven{$d^2_\MP(T,T')$} is a lower bound on $d_\TBR(T,T')$~\cite{fischer2014} and $d_\TBR(T,T')$ is a lower bound on the rSPR distance between two rooted binary phylogenetic $X$-trees that can be obtained by rooting $T$ and $T'$. We now make this more precise.

\begin{figure}
\center
\scalebox{1}{\input{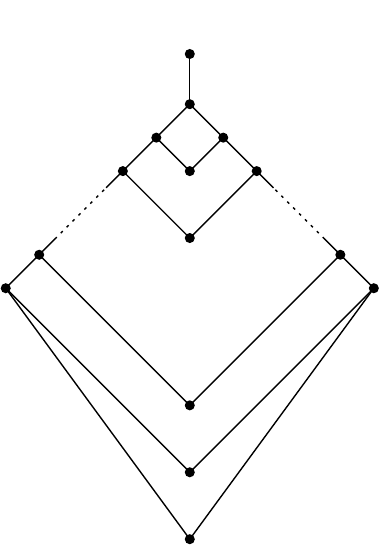_t}}
\caption{For $k\geq 1$, the cyclic $k$-generator $G_k'$ used in the construction of a family of pairs of rooted phylogenetic trees to show that the linear kernel established in Theorem~\ref{t:kernel} is tight. All edges are directed downwards.} 
\label{fig:tight}
\end{figure}

\begin{theorem}\label{t:tight}
Let $S$ and $S'$ be two rooted phylogenetic $X$-trees such that $d_\rSPR(S,S')\geq 1$. 
Suppose that $S$ and $S'$ cannot be reduced any further by applying the subtree, chain, or 3-2-chain reduction. Then $|X|\leq 9d_\rSPR(S,S')-3$ is a tight bound.
\end{theorem}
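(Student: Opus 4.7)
The plan is to construct, for each $k \geq 1$, a pair of rooted phylogenetic $X$-trees $S_k$ and $S'_k$ that is fully reduced with respect to all three rules, has $|X| = 9k - 3$ leaves, and satisfies $d_\rSPR(S_k, S'_k) = k$. The starting point is the cyclic $k$-generator $G'_k$ depicted in Figure~\ref{fig:tight}. I would attach leaves to the sides of $G'_k$ in precisely the maximum quantities permitted by the case analysis in the proof of Theorem~\ref{t:kernel}: three leaves to each edge side not directed into a vertex side, one leaf to each edge side directed into a vertex side, and one leaf to each vertex side. By Lemma~\ref{l:generator-edges}, this produces a rooted leaf-labeled graph $G_k$ with $r(G_k) = k$ and exactly $9k - 3$ leaves. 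The pair $(S_k, S'_k)$ is then defined as two subdivisions of $G_k$, one for each tree, where the two in-edges at every reticulation are split between the two trees (as in Figure~\ref{fig:graph-generator}).

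The next step is to verify that $(S_k, S'_k)$ admits no further reduction. The subtree rule fails because no common pendant subtree has more than one leaf, since each vertex side and its two incoming edges each carry only one leaf. The chain rule fails because every edge side away from vertex sides carries at most three leaves. Avoiding the 3-2-chain rule requires more care and is the reason the generator in Figure~\ref{fig:tight} has been designed the specific way it has: one must arrange the three leaves on each free edge side so that the common pendant 3-chain sits in exactly one of $S_k, S'_k$ and appears as a genuine 3-chain (not a 2-chain) in the other. The upper bound $d_\rSPR(S_k, S'_k) \leq k$ follows immediately from Theorem~\ref{t:second-characterization} since $G_k$ displays both trees and $r(G_k) = k$.

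The main obstacle is the matching lower bound $d_\rSPR(S_k, S'_k) \geq k$. Following the strategy of~\cite[Theorem 6]{tightkernel}, I would pass to the unrooted analogues $\hat S_k, \hat S'_k$ obtained by suppressing the roots, and chain the inequalities $d_\rSPR(S_k, S'_k) \geq d_\TBR(\hat S_k, \hat S'_k) \geq d^2_\MP(\hat S_k, \hat S'_k)$. It then suffices to exhibit a single binary character $f: X \to \{0,1\}$ for which $|l_f(\hat S_k) - l_f(\hat S'_k)| \geq k$. The idea is to two-colour the leaves according to their side of attachment around each reticulation of $G'_k$, so that the topology of one tree admits an extension with small parsimony score while the twisted topology of the other forces one additional state change per reticulation. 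Exhibiting explicit low-cost extensions for one tree and a Fitch-style lower bound for the other, uniformly over $k$, is where most of the detailed work goes; in particular, the three leaves on each free edge side must be coloured "neutrally" so that they contribute equally to $l_f(\hat S_k)$ and $l_f(\hat S'_k)$ and do not wash out the $k$-unit gap generated by the reticulations.
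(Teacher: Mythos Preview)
Your proposal follows the same scaffolding as the paper's proof: the same generator $G'_k$, the same leaf attachments to reach $9k-3$ leaves, extraction of $S_k$ and $S'_k$ by splitting each reticulation's in-edges between the two trees, the upper bound via Theorem~\ref{t:second-characterization}, and the lower bound via $d_\rSPR \geq d_\TBR \geq d^2_\MP$ with a binary character witness.

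Two points deserve comment. First, your discussion of 3-2-chain irreducibility misplaces the concern. The three leaves on an edge side not directed into a vertex side form a common 3-chain that is \emph{not pendant} in either $S_k$ or $S'_k$ (both endpoints of such a side are interior to the resulting trees), so the 3-2-chain rule cannot fire there and no special arrangement of those triples is needed. The place to check is around each vertex side; but with only one leaf on each of the two incoming edge sides and one at the vertex side itself, only pendant 2-chains (cherries) arise in $S_k$ and $S'_k$, and these cherries differ between the two trees, so no common pendant subtree and no pendant 3-chain appears.

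Second, the paper's witness character is far simpler than what you sketch. Instead of a per-reticulation two-colouring with careful ``neutral'' handling of the free-edge triples, the paper fixes a single internal vertex $v_1$ of $S_k$ (one of the three subdivision vertices on the distinguished edge side $(u,w)$ of $G'_k$) and sets $f(x)=0$ precisely for leaves that are descendants of $v_1$ in $S_k$. This gives $l_f(\bar S_k)=1$ immediately, and a direct Fitch computation yields $l_f(\bar S'_k)=k+1$. The gap of $k$ falls out with no balancing at all, eliminating exactly the difficulty you flag as ``where most of the detailed work goes.''
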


\begin{proof}
Let $k\geq 1$, and let $G_k'$ be the cyclic $k$-generator that is shown in Fig.~\ref{fig:tight}. Observe that $G'_k$ has $k$ vertex sides, $2k$ edge sides that are directed into a vertex side and $2k-1$ edge sides that are not directed into a vertex side. Obtain a rooted leaf-labeled graph $G_k$ on $X\cup\{\rho\}$ from $G_k'$ by attaching one leaf to each vertex side and to each edge side directed into a vertex side, and attaching three leaves to each remaining edge side. Then $|X|=k+2k+3(2k-1)=9k-3$. In what follows, we say that an edge $(u,v)$ that is directed into a reticulation in $G_k$ is {\it a left reticulation edge} (resp. {\it right reticulation edge}) if, in the process of obtaining $G_k$ from $G_k'$, $u$ subdivides an edge side $(p,v)$ of $G_k'$, where $v$ is a vertex side and \steven{$p$} is to the \steven{left} (resp. \steven{right}) of $v$ in Fig.~\ref{fig:tight}.  Now, let $S_k$ be the rooted phylogenetic tree with label set $X\cup\{\rho\}$ obtained from $G_k$ by deleting all right reticulation edges and suppressing all resulting vertices of in-degree 1 and out-degree 1. Similarly, let $S_k'$ be the  rooted phylogenetic $X$-tree obtained from $G_k$ by deleting all left reticulation edges and suppressing all resulting vertices of in-degree 1 and out-degree 1. It is straightforward to check 
that $S_k$ and $S_k'$ cannot be reduced under the subtree, chain, or 3-2-chain reduction.  We next show that $d_\rSPR(S_k,S_k')=k$. By construction, $S_k$ and $S_k'$ are displayed by $G_k$ and, so $d_\rSPR(S_k,S_k')\leq k$. It remains to show that $d_\rSPR(S_k,S_k')\geq k$.
\steven{The claim holds immediately when $k=1$ because $S_k \neq S_k'$. Hence, we assume that $k \geq 2$.}
Let $\bar{S}_k$ and $\bar{S}_k'$ be the two unrooted binary phylogenetic $X$-trees obtained from $S_k$ and $S_k'$, respectively, by deleting $\rho$, suppressing the resulting vertex of in-degree 0 and out-degree 2, and ignoring the directions on the edges. Consider the edge side $(u,w)$ of $G_k'$ as shown in Fig.~\ref{fig:tight}. By construction, \steven{and because $k \geq 2$,}
there is a directed path $(u,v_1),(v_1,v_2),(v_2,v_3),(v_3,w)$ in $G_k$ and, therefore, also in $S_k$. Now let $f$ be the binary character that assigns 0 to each element in $X$ if and only if it is a descendant of $v_1$ in $S_k$. Then $l_f(\bar{S}_k)=1$. On the other hand, by applying the well-known Fitch algorithm~\cite{fitch1971}, we see that $l_f(\bar{S}_k')=k+1$ and, thus, $$|1-(k+1)|=k\leq \steven{d^2_\MP}(\bar{S}_k,\bar{S}_k')\leq d_\TBR(\bar{S}_k,\bar{S}_k'),$$ where the last inequality is established in~\cite{fischer2014}. We next show~that $d_\TBR(\bar{S}_k,\bar{S}_k')$ is a lower bound on $d_\rSPR(S_k,S_k')$. Let $F_k$ be a maximum agreement forest for $S_k$ and $S_k'$. Let $L_\rho$ be the element in $F_k$ such that $\rho\in L_\rho$. Then, the forest $\bar{F}_k$  obtained from $F_k$ by replacing $L_\rho$ with $L_\rho\setminus \{\rho\}$ is an (unrooted) agreement forest for $\bar{S}_k$ and $\bar{S}_k'$ with $|\bar{F}_k|\leq |F_k|$. In summary, we have 
\begin{align*}
k& \leq \steven{d^2_\MP}(\bar{S}_k,\bar{S}_k')\leq d_\TBR(\bar{S}_k,\bar{S}_k')\leq |\bar{F}_k|-1\leq |F_k|-1=d_\rSPR(S_k,S_k'),
\end{align*}
where the third inequality follows from~\cite{AllenSteel2001}. Setting $S=S_k$ and $S'=S_k'$, the theorem now follows.\qed
\end{proof}

\section{Minimum Hybridization}
In this section, we turn to rooted leaf-labeled graphs without any directed cycle which are known as {\it rooted phylogenetic networks}. \delete{Phylogenetic networks are becoming increasingly important in the study of evolutionary processes such as lateral gene transfer and hybridization that cannot be represented by a single rooted phylogenetic tree.} In this context, computing the {\it hybridization number} 
$$r(T,T')=\min_N\{r(N)\},$$
\blue{where the minimum is taken over all rooted phylogenetic networks that display $T$ and $T'$,} has attracted much interest over the last 15 years. 
The hybridization number can \fudge{also} be characterized in terms of agreement forests. Let $F=\{L_\rho,L_1,\ldots,L_k\}$ be an agreement forest for $T$ and $T'$. Then $F$ is {\it acyclic} if the graph $G_F$ with vertex set $F$ and for which $(L_i,L_j)$ with $i,j\in\{\rho,1,\ldots,k\}$ is an edge precisely if 

\begin{enumerate}[(i)]
\item the root of $T[L_i]$ is an ancestor of the root of $T[L_j]$, or
\item the root of $T'[L_i]$ is an ancestor of the root of $T'[L_j]$.
\end{enumerate}
does not contain a directed cycle. Moreover, a {\it maximum acyclic agreement forest} for $T$ and $T'$ is an acyclic agreement forest for $T$ and $T'$ whose number of elements is minimum.

\begin{theorem}\cite{baroni2005bounding}
Let $T$ and $T'$ be two rooted phylogenetic $X$-trees, and let $F$ be a maximum acyclic agreement forest for $T$ and $T'$. Then $r(T,T')=|F|-1$.
\end{theorem}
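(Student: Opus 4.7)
The plan is to follow the structure of the proof of Theorem~\ref{t:second-characterization}, which characterized $r^\circ(T,T')$ in terms of (ordinary, possibly cyclic) agreement forests, and to identify exactly where the additional acyclicity requirement on both sides (networks and forests) comes in. Concretely, I will show $r(T,T') \le |F|-1$ and $r(T,T') \ge |F|-1$ separately.

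For the inequality $r(T,T') \ge |F|-1$, I would take an optimal rooted phylogenetic network $N$ (an acyclic rooted leaf-labeled graph in the terminology of Section~3) displaying $T$ and $T'$ with $r(N)=r(T,T')$, and mimic the first half of the proof of Theorem~\ref{t:second-characterization}. That is: fix subdivisions $E_T$ and $E_{T'}$ of $T$ and $T'$ inside $N$, extend $E_{T'}$ to a directed spanning tree, let $A$ be the set of edges of $E_T$ missing from this spanning tree (so $|A|\le r(N)$), and form a forest $F$ by deleting the edges of $A$ from the subdivision of $T$, then pruning/suppressing as before. The resulting partition of $X\cup\{\rho\}$ already satisfies (P1) and (P2). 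The only new ingredient is the verification that $F$ is \emph{acyclic}, i.e. that the digraph $G_F$ has no directed cycle. For this I would argue that if $(L_i,L_j)$ is an edge of $G_F$ then the root of $L_j$ (viewed inside $N$) is reachable by a directed path in $N$ from the root of $L_i$ in $N$; hence a directed cycle in $G_F$ would force a directed cycle in $N$, contradicting acyclicity. Maximality of the acyclic agreement forest then gives $|F|-1 \le |A| \le r(N) = r(T,T')$.

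For the converse $r(T,T') \le |F|-1$, I would start from a maximum acyclic agreement forest $F=\{L_\rho,L_1,\ldots,L_k\}$ and build an acyclic rooted leaf-labeled graph $N$ displaying $T$ and $T'$ with $r(N)=|F|-1=k$. Because $G_F$ is acyclic I can fix a topological ordering $L_\rho, L_{\pi(1)}, \ldots, L_{\pi(k)}$ of its vertices. I then glue the components $T[L_i]$ together into a single acyclic network by, for $i$ from $1$ to $k$ in topological order, introducing a fresh reticulation that hangs the root of $T[L_{\pi(i)}]$ beneath two appropriately chosen edges (one in the partially built network and one reflecting the position of $L_{\pi(i)}$ in $T'$). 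The topological order ensures that every new reticulation edge is directed from an already-placed component into a later one, so no directed cycle is ever created; this is exactly the obstacle that rules out a direct induction in the spirit of Theorem~\ref{t:second-characterization}. Verifying that the resulting $N$ displays both $T$ and $T'$ is then a bookkeeping exercise using (P1) to match $T[L_i]$ and $T'[L_i]$ locally, and (P2) to ensure the local subtrees remain vertex-disjoint under the gluing.

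The main obstacle, therefore, is the acyclicity control. In one direction it requires translating a cycle in $G_F$ into a cycle in a network; in the other, it requires exploiting a topological ordering of $G_F$ when inserting reticulations, so that the construction never produces a back-edge. All other steps are straightforward adaptations of the corresponding arguments already carried out for the cyclic case in Theorem~\ref{t:second-characterization}.
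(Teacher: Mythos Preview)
The paper does not prove this theorem; it is quoted as a known result with a citation to Baroni, Gr\"unewald, Moulton and Semple (2005), so there is no in-paper argument to compare against. Your plan is essentially the approach taken in that reference and in the subsequent literature: extract an acyclic agreement forest from an optimal network for one inequality, and glue forest components in a topological order of $G_F$ to build an acyclic network for the other.

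One step is subtler than your sketch suggests. In the direction $r(T,T')\ge |F|-1$ you say that an edge $(L_i,L_j)$ of $G_F$ yields a directed path in $N$ from ``the root of $L_i$'' to ``the root of $L_j$''. But the root of $T[L_i]$ and the root of $T'[L_i]$ generally map to \emph{different} vertices of $N$ under the two embeddings, and in a DAG two common ancestors of the same leaf need not be comparable. Thus a cycle in $G_F$ that alternates between $T$-type edges and $T'$-type edges does not automatically concatenate into a directed cycle in $N$. To make the argument go through you must fix, for every block $L_i$, a single representative vertex $v_i\in V(N)$ and show that \emph{both} ancestor conditions force a directed $v_i\to v_j$ path in $N$; this uses the specific way $F$ was carved out (in particular that the cut set $A$ lies in $E_T\setminus E_{T'}$ and that $E_{T'}$ is a spanning tree). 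Once that choice is made and justified, the rest of your outline, including the topological-order gluing for the converse, is correct.
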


\noindent
Computing $r(T,T')$ is known to be NP-hard but fixed-parameter tractable~\cite{sempbordfpt2007,bordewich2007computing}, and the current best \steven{weighted} kernel\delete{\footnote{\steven{Viewed without weighting, the hybridization number kernel is quadratic in size \cite{ierselLinz2013}.}}} has size $O(9k)$, where $k=r(T,T')$~\cite{approximationHN}. This
result relies on applying the subtree reduction and the following modified chain reduction that reduces a common $n$-chain
to a \steven{(weighted)} $2$-chain, and
{\it $k$-generators} which are cyclic $k$-generators with no directed cycle.\\

\noindent{\bf Chain reduction.} For $n\geq 3$, let $C=(x_1,x_2,\ldots,x_n)$ be a maximal $n$-chain that is common to $T$ and $T'$. Then set $S=T|X\setminus\{x_3,x_4,\ldots,x_n\}$ and $S'=T'|X\setminus\{x_3,x_4,\ldots,x_n\}$.\\

It is natural to ask whether or not the 3-2-chain reduction can also be applied when computing $r(T,T')$. The next lemma, whose proof  follows from the proof of Lemma~\ref{l:3-2-reduction} by considering maximum \fudge{\emph{acyclic}} agreement forests,
answers this question affirmatively. A slightly more general result was also established in~\cite{whidden2013fixed}.

\begin{lemma}\label{l:hybrid-3-2-reduction}
Let $T$ and $T'$ be two rooted phylogenetic $X$-trees, and let $S$ and $S'$ be two trees obtained from $T$ and $T'$, respectively, by a single application of the 3-2-chain reduction. Then $r(S,S')=r(T,T')-1$.
\end{lemma}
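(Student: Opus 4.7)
The plan is to mirror the structure of the proof of Lemma~\ref{l:3-2-reduction} but replace ``maximum agreement forest'' with ``maximum acyclic agreement forest'' throughout, and then verify that acyclicity is preserved under each construction used in that proof. Using the same notation as in the definition of the 3-2-chain reduction, I would first take a maximum acyclic agreement forest $F_S$ for $S$ and $S'$ and show that $F_S \cup \{\{x_j\}\}$ is an acyclic agreement forest for $T$ and $T'$. The agreement forest property is immediate. For acyclicity, the only new vertex in the ancestor digraph $G_{F_S \cup \{\{x_j\}\}}$ is the singleton $\{x_j\}$, and since $x_j$ is a leaf in both $T$ and $T'$ the root of each single-vertex tree $T[\{x_j\}]$ and $T'[\{x_j\}]$ is itself; hence $\{x_j\}$ has no outgoing edges in the ancestor digraph and cannot participate in any directed cycle. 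This yields $r(S,S') \geq r(T,T') - 1$.

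For the reverse inequality, take a maximum acyclic agreement forest $F_T$ for $T$ and $T'$. If $\{x_j\} \in F_T$, then $F_T \setminus \{\{x_j\}\}$ is an acyclic agreement forest for $S$ and $S'$: removing the singleton can only delete vertices and edges in the ancestor digraph and therefore preserves acyclicity. Otherwise, let $B \in F_T$ be the block with $|B|\geq 2$ containing $x_j$. Exactly as in the proof of Lemma~\ref{l:3-2-reduction}, the partition property (P2) forces $x_i$ and $x_3$ not to both lie in $B$, and we branch into the three cases (i) $x_3 \in B$, $x_i \notin B$; (ii) $x_i \in B$, $x_3 \notin B$; (iii) $x_i, x_3 \notin B$. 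In each case one rearranges $F_T$ into a new agreement forest of the same size in which $\{x_j\}$ is a block.

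The key additional step is to verify that each of these three rearrangements preserves acyclicity. In cases (i) and (ii), the rearrangement swaps $x_i$ (respectively $x_3$) into the block $B$ in place of $x_j$. Because $x_i, x_j, x_3$ sit together in a common pendant 3-chain of $T$ and a common pendant 2-chain of $T'$, the roots of the subtrees of $T$ and $T'$ spanned by the new block $B' = (B \setminus \{x_j\}) \cup \{x_i\}$ (or the analogous block with $x_3$) coincide with the roots of the corresponding subtrees spanned by $B$ in $T$ and $T'$; hence the incoming and outgoing ancestor-edges at this block are unchanged, so the only effect on $G_{F_T}$ is to relocate an isolated singleton vertex, which cannot create a cycle. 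In case (iii), where $\{x_i\}$ and $\{x_3\}$ are both singletons in $F_T$, the rearrangement replaces them with the merged block $\{x_i, x_3\}$ and the singleton $\{x_j\}$, and by the same locality argument the root-of-subtree ancestry in $T$ and $T'$ of the merged block equals that of the two singletons it replaces, so again no new ancestor-edges are introduced. Thus we obtain a maximum acyclic agreement forest for $T$ and $T'$ containing $\{x_j\}$, and removing this singleton yields an acyclic agreement forest for $S$ and $S'$ of size $|F_T|-1$, giving $r(S,S') \leq r(T,T')-1$.

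The main obstacle is precisely the acyclicity verification above, since in the unweighted rSPR setting the partition property alone suffices. Once one observes that in each of the three cases the set of roots of the subtrees $T[\cdot]$ and $T'[\cdot]$ associated with the blocks of the forest is unchanged (only the labels of certain leaves hanging off these blocks are reshuffled among chain elements at the same pendant location), acyclicity of the ancestor digraph is inherited from $F_T$, and the lemma follows.\qed
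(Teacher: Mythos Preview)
Your approach is exactly what the paper intends: it explicitly states that the lemma ``follows from the proof of Lemma~\ref{l:3-2-reduction} by considering maximum \emph{acyclic} agreement forests,'' and you have carried this out with additional acyclicity checks. So the strategy is correct and matches the paper.

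There is, however, a small inaccuracy in your acyclicity verification. You assert that in cases (i) and (ii) the roots of $T[B']$, $T'[B']$ coincide with those of $T[B]$, $T'[B]$. This is true in $T$ (since $x_i$ and $x_j$ are siblings there), but it can fail in $T'$. For instance, in case (i) with $B=\{x_j,x_3\}$, the root of $T'[B]$ is $\mathrm{LCA}_{T'}(x_j,x_3)$, which lies strictly above the parent $q$ of the cherry $\{x_i,x_3\}$, whereas the root of $T'[B']$ for $B'=\{x_i,x_3\}$ is $q$ itself. So the root genuinely moves. A similar shift occurs for the block $B\setminus\{x_j\}$ in case~(iii), which you do not discuss.

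The fix is short: whenever the root of the modified block moves, it moves \emph{downwards} to a vertex that already belongs to the old $T'[B]$ (respectively $T[B]$), because the path between the old and new roots lies inside that subtree. Property~(P2) then guarantees that no other block $L$ has the root of $T'[L]$ strictly between the old and new roots, so no new in-edge to $B'$ is created; and moving a root downwards can only delete out-edges. The only genuinely new edges in $G_F$ point into the singleton $\{x_j\}$, which is a sink and hence cannot lie on a directed cycle. With this refinement your argument goes through.
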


Let $T$ and $T'$ be two rooted phylogenetic $X$-trees, and let $S_1$ and $S'_1$ be two trees resulting from $T$ and $T'$, respectively, by \fudge{exhaustively} applying the subtree and chain reduction.
In~\cite{sempbordfpt2007}  a weight is associated to each 2-chain that results from applying the chain reduction. \delete{These weights are necessary to compute the size of a maximum acyclic agreement forest for $T$ and $T'$ given such a forest for $S_1$ and $S'_1$.} Hence, if we first apply a chain reduction and, subsequently, a 3-2-chain reduction, we \steven{would} need to take into account the weight of any previously reduced $n$-chain with $n\geq 3$. To avoid this, we establish the following.

\begin{lemma}\label{l:order}
Let $T$ and $T'$ be two rooted phylogenetic $X$-trees. Let $S_2$ and $S'_2$ be two trees obtained from $T$ and $T'$, respectively, by applying the subtree and 3-2-chain reduction until no such reduction is possible, and let $S$ and $S'$ be  two trees obtained from $S_2$ and $S'_2$, respectively, by applying the chain reduction until no further reduction is possible. Then none of the three reductions can be applied to $S$ and $S'$.
\end{lemma}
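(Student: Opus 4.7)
The plan is to show that chain reduction, which is trivially exhausted in $S, S'$ by construction, cannot create a new opportunity for subtree or 3-2-chain reduction that was not already present (and hence ruled out) in $S_2, S'_2$. The key observation I will use is that chain reduction is highly local: for each common maximal $n$-chain $(x_1, \ldots, x_n)$, only the leaves $x_4, \ldots, x_n$ are deleted and only the degree-$2$ vertices $p_4, \ldots, p_n$ on the chain's spine are suppressed, so that the parent of $p_3$ changes from $p_4$ (in $S_2$) to $q$ (in $S$). In particular, no leaf has its parent changed, and each of the suppressed $p_i$ for $i \geq 4$ has exactly one leaf child ($x_i$) and one internal child ($p_{i-1}$).

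For the 3-2-chain reduction, I will assume a pendant 3-chain $(a,b,c)$ exists in $S$, with cherry $\{a,b\}$ at vertex $p$ and $c$ attached as the sibling of $p$ at a vertex $q^*$. Because $p$ has two leaf children, $p$ cannot coincide with any $p_3$ of a reduced chain (whose children include the internal $p_2$) nor with any suppressed $p_i$ for $i \geq 4$; so $p$ already exists in $S_2$ with the same children. An analogous argument will show that $q^*$ is neither a suppressed $p_i$ nor any $q$ above a reduced chain, since none of those vertices has a leaf-cherry as a child in $S_2$. Hence $q^*$ also exists in $S_2$ with children $p$ and $c$, so the same pendant 3-chain $(a,b,c)$ is present in $S_2$. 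The identical reasoning applied to the accompanying pendant 2-chain $(x_i, c)$ in $S'$ places it in $S'_2$, and the 3-2-chain reduction would then apply to $S_2, S'_2$, contradicting its exhaustion there.

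For the subtree reduction, I will take common pendant subtrees $R, R'$ in $S, S'$ with a shared leaf set $L$ of size at least $2$, rooted at $r, r'$ respectively. Since $r$ and $r'$ persist in $S_2, S'_2$, I will consider the lifted pendant subtrees rooted there, whose leaf sets are $L \cup C_r$ and $L \cup C_{r'}$, where $C_r$ (resp.\ $C_{r'}$) collects the previously removed chain leaves that become descendants of $r$ (resp.\ $r'$) after re-insertion. Because each chain reduction operates on a common chain and the inclusion $\{x_1,x_2,x_3\} \subseteq L$ is equivalent to $r$ lying above the chain's top in $S$ (and to $r'$ lying above in $S'$), we get $C_r = C_{r'}$. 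Topological consistency of the lifts is enforced by the isomorphism $R \cong R'$, which forces each encompassed chain to share its pendant/non-pendant status across $S$ and $S'$; the ``pendant in both'' sub-case would already yield a common cherry $\{x_1,x_2\}$ in $S_2, S'_2$, violating subtree-reduction exhaustion. Otherwise the lifts are isomorphic common pendant subtrees in $S_2, S'_2$ of size at least $2$, again contradicting exhaustion. Small degenerate subcases where $r$ lies inside a chain region (so $r \in \{p_1, p_2, p_3\}$) give rise to a common cherry ($\{x_1,x_2\}$ or $\{x_1,y\}$) already in $S_2, S'_2$, ruled out likewise. The main technical obstacle is verifying that the lifted subtrees in $S_2$ and $S'_2$ really match; this reduces to the common-chain condition enforcing equal leaf additions and the topological constraint imposed by $R \cong R'$.
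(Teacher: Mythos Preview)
Your proof has a definitional error at the outset. The lemma sits in the minimum-hybridization section, where the chain reduction is the \emph{modified} one that keeps only $x_1,x_2$ and deletes $x_3,\ldots,x_n$ (for $n\geq 3$). Your entire analysis is based on the rSPR-section version that keeps $x_1,x_2,x_3$ and deletes $x_4,\ldots,x_n$: you write ``only the leaves $x_4,\ldots,x_n$ are deleted'', you track the vertex $p_3$, and in the subtree case you test the inclusion $\{x_1,x_2,x_3\}\subseteq L$. With the correct reduction the surviving spine vertex is $p_2$, not $p_3$, and the resulting 2-chain may be pendant in one tree and non-pendant in the other, which changes the local picture you rely on.

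Even setting the definition aside, your key step for the 3-2-chain case has a gap. You assert that $q^*$ (the parent of the cherry $p=\{a,b\}$ in $S$) cannot be any vertex $q$ sitting directly above a reduced chain, ``since none of those vertices has a leaf-cherry as a child in $S_2$''. But this is precisely what can happen: if the reduced common chain was pendant in $S_2$, then after the (correct) chain reduction $q$'s child $p_n$ is replaced by $p_2$, and $p_2$ \emph{is} a leaf-cherry $\{x_1,x_2\}$. If $q$'s other child in $S_2$ was a leaf $c$, you obtain the pendant 3-chain $(x_1,x_2,c)$ in $S$ with $q^*=q$, and your argument that ``$q^*$ exists in $S_2$ with the same children'' fails. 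The paper resolves this not by a locality argument but by tracing back and invoking \emph{maximality} of the common chain that was reduced: either $(x_1,\ldots,x_n,c)$ is a longer common chain (contradiction), or one finds a 3-2-chain configuration already present in $S_2,S'_2$ (also a contradiction). Your sketch does not reach this case analysis.

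For comparison, the paper's proof is organized differently. It first observes that the chain reductions can be viewed as a single simultaneous step on leaf-disjoint maximal common chains (using maximality), and that chain reduction cannot create new common pendant subtrees. Then, assuming a 3-2-chain instance $(x_1,x_2,x_3)$ in $S$ with $(x_i,x_3)$ in $S'$, it argues that one of these was not pendant in $S_2$ or $S'_2$ and must therefore have been \emph{made} pendant by deleting leaves of some common chain $C$; a short case analysis on which leaf of the 3-2-chain is adjacent to $C$ then yields either a contradiction to the maximality of $C$ or an earlier 3-2-chain instance in $S_2,S'_2$.
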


\begin{proof}
\fudge{
We make use of the following observations.
First, the chain reduction cannot create new common pendant subtrees.
Second, the chain reduction cannot
use
leaves from a weighted 2-chain created earlier \blue{since} this would contradict the maximality of the chain that was reduced earlier. Hence, we can view exhaustive applications of the chain reduction as simultaneously applying the reduction to a maximal set of leaf-disjoint maximal common chains in $S_2$ and $S'_2$, immediately yielding $S$ and $S'$. Clearly,}
$S$ and $S'$ do not have a common subtree or $n$-chain with $n\geq 3$. Assume that $S$ and $S'$ can be further reduced under the 3-2-chain reduction. Then there exist a pendant 3-chain $C_3=(x_1,x_2,x_3)$ in one of $S$ or $S'$, say $S$, and a pendant 2-chain $C_2=(x_3,x_i)$ with $i\in\{1,2\}$ in $S'$. Since $S_2$ and $S_2'$ cannot be reduced any further under the 3-2-chain reduction,  $C_3$ is not a pendant chain in $S_2$ or $C_2$ is not a pendant chain of $S_2'$;  \steven{and 
\blue{the existence of $C_1$ and $C_2$} is necessarily caused by leaves that are deleted by the chain reduction.}  
First, if $C_2$ is not pendant in $S_2'$, then there exists a pendant 2-chain $c\in\{(x_i,x_l),(x_3,x_l)\}$ in $S_2'$ \blue{with} \steven{$x_l\in X\setminus \{x_1,x_2,x_3\}$}, and  an $n$-chain $C$ with $n\geq 3$ such that $C$ is common to $S_2$ and $S_2'$ and the first two elements of $C$ are identical with those of $c$. In obtaining $S$ and $S'$ from $S_2$ and $S_2'$ respectively, $C$ is reduced to $c$; thereby contradicting that $C_2$ is pendant in $S'$.  Second, if $C_3$ is not pendant in $S_2$, then an element  $c\in\{(x_1,x_l),(x_2,x_l),(x_3,x_l),(x_1,x_2,x_l)\}$ is a pendant chain in $S_2$. Moreover, similar to the first case there exists an $n$-chain $C$ with $n\geq 3$ such that $C$ is common to $S_2$ and \steven{$S_2'$} and the first two (resp. three) elements of $C$ are identical with those in $c$. If $c\ne (x_1,x_2,x_l)$, then $C$ is reduced to a 2-chain that contains $x_l$; thereby contradicting that $C_3$ is pendant in $S'$. On the other hand, if $c= (x_1,x_2,x_l)$ then, as $C$ is common to  $S_2$ and $S_2'$, it follows that $(x_i,x_3,x_j)$ is a pendant 3-chain of $S_2'$ and $(x_i,x_j)$ is a pendant 2-chain of $S_2$, \steven{where $x_j$ is the leaf in $\{x_1, x_2\}$ not equal to $x_i$.} Hence $S_2$ and $S_2'$ can be
reduced by a 3-2-chain reduction;~a contradiction.\qed
\end{proof}

The next theorem can be established analogously to that of Theorem~\ref{t:kernel} by considering (i) $k$-generators and rooted phylogenetic networks \delete{instead of cyclic $k$-generators and rooted leaf-labeled graphs, respectively,} and (ii)
that at most \steven{\emph{two}} leaves can be attached \blue{to}  each edge side of a $k$-generator that is not directed into a vertex \blue{side}. 

\begin{theorem}
Let $S$ and $S'$ be two rooted phylogenetic $X$-trees such that $r(S,S')\geq 1$. 
Suppose that $S$ and $S'$ cannot be reduced any further by applying the subtree, 3-2-chain, or chain reduction. Then $|X|\leq 7r(S,S')-2$.
\end{theorem}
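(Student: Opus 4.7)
The plan is to mirror the proof of Theorem~\ref{t:kernel}, replacing the cyclic objects with their acyclic analogues. First, by Lemma~\ref{l:order}, we may assume $S$ and $S'$ arise by exhaustively applying the subtree and 3-2-chain reductions, and then exhaustively applying the chain reduction; the hypothesis that no reduction further applies is therefore consistent with this order. Let $N$ be a rooted phylogenetic network (i.e.\ a rooted leaf-labeled graph with no directed cycle) that displays $S$ and $S'$ with $r(N)=r(S,S')=k\geq 1$, and let $N'$ be the $k$-generator that underlies $N$. Note $N'$ has no directed cycle, but the edge-counting identity of Lemma~\ref{l:generator-edges} still holds (it was originally established in~\cite{approximationHN} for exactly this acyclic setting): writing $k_0$ for the number of vertex sides and $k_1$ for the number of vertices of in-degree~2 and out-degree~1, we have $k_0+k_1=k$ and $4k_0+3k_1-1$ edge sides, of which $2k_0$ are directed into a vertex side.

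The three leaf-attachment bounds are then established exactly as in the proof of Theorem~\ref{t:kernel}. For a vertex side $v$: at least one leaf must be attached (else $N$ is not a rooted leaf-labeled graph) and at most one can be attached (else two common pendant leaves exist, contradicting exhaustion of the subtree reduction), so \emph{exactly} one leaf lands on each vertex side. For an edge side $e=(u,v)$ directed into a vertex side $v$ with attached leaf $x_1$: if two leaves $x_2,x_3$ were attached to $e$, then by minimality of $N$ at least one of the displayed subdivisions of $S$ or $S'$ must avoid the final edge $(p_2,v)$ on the chain leading into $v$ (otherwise the other in-edge of $v$ would be redundant and could be deleted, contradicting $r(N)=r(S,S')$); exactly as in the cyclic proof this forces a pendant 3-chain in one tree and a pendant 2-chain in the other, contradicting exhaustion of the 3-2-chain reduction. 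Hence at most one leaf lands on each such edge side. The critical change occurs for edge sides not directed into a vertex side: since the hybridization chain reduction reduces common $n$-chains for all $n\geq 3$ (rather than only $n\geq 4$), attaching three or more leaves to such an edge side would create a common 3-chain, contradicting exhaustion of the chain reduction. Thus at most \emph{two} leaves land on each such edge side.

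Summing, the number of leaves attached to $N'$ in forming $N$ (which equals $|X|$) is at most
\begin{align*}
|X| &\leq 1\cdot k_0 \;+\; 1\cdot 2k_0 \;+\; 2\cdot(2k_0+3k_1-1) \\
    &= 7k_0 + 6k_1 - 2 \\
    &\leq 7(k_0+k_1) - 2 \;=\; 7k - 2 \;=\; 7r(S,S')-2,
\end{align*}
as required.

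The only step with any real friction is checking that the minimality argument justifying the ``at most one leaf on an edge side into a vertex side'' bound still applies in the acyclic setting. This is not an actual obstacle: deleting an edge from an acyclic digraph cannot create a cycle, so the would-be smaller network obtained by removing the redundant in-edge of $v$ remains a valid rooted phylogenetic network, and contradicts $r(N)=r(S,S')$ exactly as before. Everything else is a direct transcription of the counting argument of Theorem~\ref{t:kernel} with the constant~$3$ for non-vertex-adjacent edge sides replaced by~$2$.
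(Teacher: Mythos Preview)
Your proof is correct and follows exactly the approach the paper indicates: replace cyclic $k$-generators and rooted leaf-labeled graphs by (acyclic) $k$-generators and rooted phylogenetic networks, and use that the hybridization chain reduction shortens common chains of length $\geq 3$, so at most two leaves can decorate each edge side not directed into a vertex side. Your counting $k_0 + 2k_0 + 2(2k_0+3k_1-1) = 7k_0+6k_1-2 \leq 7k-2$ is correct, and your closing remark that deleting an edge cannot create a cycle is the right observation to confirm the minimality argument transfers to the acyclic setting. One small comment: the opening appeal to Lemma~\ref{l:order} is unnecessary here---the theorem hypothesis is simply that $S$ and $S'$ are fully reduced, and the counting argument uses only that fact, not the order in which reductions were applied; Lemma~\ref{l:order} is relevant for the \emph{algorithmic} side (handling weights correctly) rather than for this bound.
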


\fudge{This kernel} is again tight. \fudge{The proof is very similar to Theorem~\ref{t:tight}, which already uses an acyclic generator. We attach two leaves instead of three to edge sides and use
that $d_\rSPR(S_k,S_k')\leq r(S_k,S_k')$.}

\delete{Observe that neither the generator $G_k'$ that is shown in Fig.~\ref{fig:tight} nor the rooted leaf-labeled graph $G_k$ that is obtained from $G_k'$ by attaching leaves as described in proof of  Theorem~\ref{t:tight}  contains a directed cycle. Then following the same construction as that presented in the proof of Theorem~\ref{t:tight} but attaching two (instead of three) leaves to each edge side of $G_k'$ that is not directed into a vertex side results in a rooted phylogenetic network $G_k$ and  two rooted phylogenetic trees $S_k$ and $S_k'$ displayed by $G_k$ that each have size $7k-2$. The next theorem then follows by noting that $d_\rSPR(S_k,S_k')\leq r(S_k,S_k')$.}

\begin{theorem}
Let $S$ and $S'$ be two rooted phylogenetic $X$-trees such that $r(S,S')\geq 1$. 
Suppose that $S$ and $S'$ cannot be reduced any further by applying the subtree, 3-2-chain, or chain reduction. Then $|X|\leq 7r(S,S')-2$ is a tight bound.
\end{theorem}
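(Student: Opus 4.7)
The proof plan mirrors that of Theorem~\ref{t:tight}. A useful preliminary observation is that the cyclic $k$-generator $G_k'$ depicted in Figure~\ref{fig:tight} contains no directed cycle and is therefore also a $k$-generator in the sense of~\cite{approximationHN}. I construct $G_k$ from $G_k'$ by attaching one leaf to each vertex side, one leaf to each edge side directed into a vertex side, and exactly \emph{two} leaves (rather than three, as in Theorem~\ref{t:tight}) to each remaining edge side. Since $G_k'$ has $k$ vertex sides, $2k$ edge sides directed into vertex sides and $2k-1$ other edge sides, this yields
\[
|X| \;=\; k + 2k + 2(2k-1) \;=\; 7k-2.
\]
Because $G_k'$ is acyclic and the attachment operation preserves acyclicity, $G_k$ is itself a rooted phylogenetic network with $r(G_k) = k$. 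I define $S_k$ and $S_k'$ from $G_k$ exactly as in Theorem~\ref{t:tight} by deleting all right (respectively, left) reticulation edges and suppressing degree-two vertices; then both $S_k$ and $S_k'$ are displayed by $G_k$, and hence $r(S_k,S_k') \leq k$.

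For the matching lower bound, I use the inequality $d_\rSPR(S_k,S_k') \leq r(S_k,S_k')$, which holds because every rooted phylogenetic network displaying two trees is in particular a rooted leaf-labeled graph displaying them, giving $r^\circ(S_k,S_k') \leq r(S_k,S_k')$, and Theorem~\ref{t:second-characterization} identifies $r^\circ$ with $d_\rSPR$. It therefore suffices to establish $d_\rSPR(S_k,S_k') \geq k$, which I would do by replaying the Fitch-style parsimony argument of Theorem~\ref{t:tight}: select the analogous marked edge side $(u,w)$, let $v_1$ be its topmost subdivision vertex, and let $f$ be the binary character that assigns $0$ to a leaf of $X$ precisely if it is a descendant of $v_1$ in $S_k$. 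The ladder skeleton of $G_k'$ is unchanged, so one still obtains $l_f(\bar{S}_k) = 1$ and $l_f(\bar{S}_k') = k+1$ from a direct Fitch computation. The chain $k \leq d^2_\MP(\bar{S}_k,\bar{S}_k') \leq d_\TBR(\bar{S}_k,\bar{S}_k') \leq d_\rSPR(S_k,S_k')$, exactly as in Theorem~\ref{t:tight}, then completes the lower bound.

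The main obstacle I anticipate is verifying that neither $S_k$ nor $S_k'$ admits any further application of the subtree, (hybridization) chain, or 3-2-chain reduction. With only two leaves per free edge side, no common 3-chain can arise from the leaves of a single edge side, but one must still exclude 3-chains that combine the pair of leaves on a free edge side with a leaf at an adjacent vertex side or incoming edge side. The specific left-right ladder geometry of $G_k'$ implies that the leaf $z$ attached at each vertex side $v$ is siblinged with the \emph{left} incoming leaf $y_L$ in $S_k$ but with the \emph{right} incoming leaf $y_R$ in $S_k'$, so the cherry around $v$ differs between the two trees; a short case analysis walking along the rungs of the ladder then rules out every potential pendant 3-chain in one tree being paired with a compatible pendant 2-chain in the other. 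The absence of common pendant subtrees with at least two leaves and of common chains of length at least three follows by arguments analogous to those in Theorem~\ref{t:tight}. Setting $S = S_k$ and $S' = S_k'$ then completes the proof.
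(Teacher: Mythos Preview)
Your proposal is correct and follows essentially the same approach as the paper: use the (already acyclic) generator $G_k'$ from Figure~\ref{fig:tight}, attach two leaves instead of three to the free edge sides to obtain $|X|=7k-2$, and appeal to $d_\rSPR(S_k,S_k')\leq r(S_k,S_k')$ together with the parsimony lower bound from Theorem~\ref{t:tight}. Your write-up is in fact more explicit than the paper's own proof, which simply points back to Theorem~\ref{t:tight} for the details; the extra care you take with irreducibility (ruling out common $3$-chains and 3-2-chain configurations arising across adjacent sides) is appropriate and not spelled out in the paper.
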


\delete{\section{Final remarks}
We note that, if only the subtree and chain reductions are applied (but not the 3-2-chain reduction), our cyclic generator approach already gives a bound of $13k-3$. This improves upon the original bound of $O(28k)$ \cite{bordewich2005computational} which only used agreement forests, highlighting the extra combinatorial insight given by generators.} \delete{Finally, we remark that the analysis in this article is much simpler than the conceptually similar approach used for \emph{un}rooted trees in \cite{kelk2020new}. This is because, when obtaining a rooted phylogenetic tree from a (possibly cyclic) phylogenetic network, one of the two edges entering each reticulation must be deleted. Since unrooted trees are undirected, tree vertices and reticulations are indistinguishable  in the underlying network/generator. The design of reduction rules and their analysis must take this greater freedom into account.}

\end{document}